\newcommand{\ve}[1]{\mathbf{#1}}
\newcommand{\algnameDT}{DISTRO\xspace}
\newcommand{\algnameD}{{\sc DISTRO}\xspace}
\def\squareforqed{\hbox{\rlap{$\sqcap$}$\sqcup$}}
\def\qed{\ifmmode\squareforqed\else{\unskip\nobreak\hfil
\penalty50\hskip1em\null\nobreak\hfil\squareforqed
\parfillskip=0pt\finalhyphendemerits=0\endgraf}\fi}
\newcolumntype{C}[1]{>{\centering\let\newline\\\arraybackslash\hspace{0pt}}m{#1}}
\newcolumntype{L}[1]{>{\raggedright\let\newline\\\arraybackslash\hspace{0pt}}m{#1}}
\newenvironment{customthm}[1]
  {\innercustomthm}
  {\endinnercustomthm}
\newtheorem{theorem}{Theorem}
\newtheorem{myprop}{Proposition}
\newcommand{\qedwhite}{\hfill \ensuremath{\Box}}
\begin{document}

\newfont{\titlefont}{phvb8t at 17pt}
\twocolumn[
\centerline{\titlefont Scalable Real-time Transport of Baseband Traffic}
\vspace{20pt}
\centerline{Krishna C. Garikipati \quad Kang G. Shin}
\vspace{10pt}
\centerline{University of Michigan}
\vspace{12pt}
]
\medskip

\thispagestyle{plain}
\pagestyle{plain}



\begin{abstract}

In wireless deployments, such as Massive-MIMO, where radio front-ends
and back-end processing are connected through a transport network, meeting
the real-time processing requirements is essential to realize the
capacity gains from network scaling. While simple forms of baseband
transport have been implemented, their real-time analysis at much larger
scale is lacking.

Towards this, we present the design, delay, and capacity analysis
of baseband transport networks, utilizing results from real-time
systems in the context of wireless processing. We propose a novel
Fat-Tree-based design, called \algnameD, for baseband transport, which is
a real-time network that bounds the maximum end-to-end
transport delay of each baseband packet. It achieves this by placing
design constraints and bounding the queuing delay at each aggregation
point in the network. We further characterize the wireless capacity
using \algnameD and provide an efficient search algorithm for the design
of a capacity-achieving baseband transport.




%
\end{abstract}
\section{Introduction}\label{intro}

In emerging wireless architectures such as C-RAN \cite{cran, cran_survey},
and Massive-MIMO \cite{massive, argos}, a baseband transport network connects
radios (front-ends) to a backend processing infrastructure. For instance, in a
C-RAN deployment, a fronthaul network carries baseband samples between the
remote radios and the baseband processors located in a datacenter. Similarly,
in massive MIMO, tens of antennas are connected to a common baseband processor
through a transport network. Since the real-time capability of baseband transport
network is the key enabler of such architectures, its design determines both the
scale and achievable wireless capacity of the deployment.

An instance of baseband transport network was realized in ARGOS \cite{argos},
where 64 antennas are connected over Ethernet for centralized processing.
While ARGOS' transport design is intuitive, its real-time analysis or applicability
in a larger network remains unknown. Clearly, a baseband transport network
should support traffic aggregation from the radios, for MIMO processing \cite{massive}, or for
resource pooling in C-RAN \cite{oai:15}, where a common compute platform decodes data/signals
for multiple base-stations. A tree-based design using aggregation switches fits the criteria,
which, coupled with packet scheduling policies, form the basis of real-time transport.

Beyond the primitives of aggregation and real-time delivery, the
design of a baseband transport network is ultimately guided by the following
key requirements:

\textbf{Guarantees}. The real-time nature of the wireless sample processing imposes
stringent constraints on the transport network requiring end-to-end (e2e) guarantees
for delay and jitter. For instance, the transport delay bound can be as low as
few microseconds for WiFi samples \cite{sora}, to few hundred microseconds for
LTE samples \cite{oai:15}. Moreover, the transport behavior of the radio samples
must be predictable, i.e., given a network topology and the traffic sources, one
must be able to model the delivery of the baseband traffic. This model is necessary
for an e2e schedulability analysis --- determining whether the given network can
meet the requested delay bounds. However, modeling packet traffic in the network core is known 
(e.g., from the real-time systems literature) to be intractable
due to the non-periodic nature of arrivals \cite{hui:95, survey:94}. Therefore,
in a general baseband transport network consisting of multiple switches, it is
not entirely clear which packet scheduling policies (implemented by switches)
will achieve e2e schedulability.

\textbf{Scalablility}. Considering the size of future radio deployments, the baseband transport
design should be scalable. It must be extensible to any number of radios while
preserving its predictable behavior. Also, it must require only few additional resources
(e.g., cables, switches) to add a large number of radios to the network. The scalability
of baseband transport is desirable, if not necessary, for massive MIMO systems which are
equipped with tens or hundreds of antennas/radios. Existing datacenter
designs \cite{alfaras, fat_tree_follow} that are optimized for scalability, are likely candidates
for baseband transport. However, they are primarily designed to handle bisection
traffic between the compute clusters, whereas traffic flows in a baseband network
are exclusively in the north--south direction.


\textbf{Optimality}. Whether the baseband traffic is schedulable or not depends
on the transport rate, which, in turn, depends on the sample quantization widths
used at the radios. As the selected quantization widths affect the wireless capacity
through quantization noise \cite{spiro, uta}, there is an indirect dependency between
schedulability of the traffic and the wireless capacity
of the network. Ideally, the network should operate at a point that ensures schedulability
but also maximizes wireless capacity.

To meet the above requirements, we propose \algnameDT, a design for real-time
baseband transport (such as fronthaul) networks that can potentially scale to a
large number of radios. \algnameD utilizes a logical tree structure of radio front-ends
and network switches: the radios represent the leaf nodes of the tree, the network
switches represent the internal nodes, and the root of the tree is a common aggregation
point that connects to a pool of baseband processors. \algnameD's design
supports real-time transport as it allows us to bound the maximum transport delay
of each baseband packet. Specifically, using a constrained tree design, the
upper bound on the waiting time at any switch can be obtained irrespective of
the input arrival sequence, from which one can obtain the maximum total delay of
a baseband flow. As a result, the network switches can utilize schedulability results
from the real-time systems literature and implement scheduling policies
(such as EDF \cite{zheng:94} and fixed-priority \cite{cmu:90}) to achieve e2e
delay guarantees.

Since scheduling policies are subject to the baseband traffic parameters, any
addition of radios or changes in them requires policy changes in the entire
network, making the design unscalable. Therefore, the tree structure in \algnameD
is partitioned into aggregation- and edge-switch networks; the schedulability
analysis is done only at the edge-switch network whereas the default packet
scheduling (FIFO) is implemented in the aggregation-switch network. This
logical division along with the tree-based design enables transport scaling to
a large number of radios.

\algnameD selects quantization widths to maximize the wireless capacity
while ensuring e2e schedulability. A brute-force search of optimal quantization
widths, however, has exponential complexity in the number of radios. By using the
monotonic dependence of  the wireless capacity and the schedulability on the quantization
widths, we propose a greedy-based approach that is optimal but has much lower
runtime complexity.

In summary, this paper makes the following contributions:

\begin{itemize}

\item \textbf{Design}. Proposal of a Fat-Tree architecture for scalable deployment of
baseband radios;

\item \textbf{Delay}. Calculation of the maximum delay bound of a baseband packet
in the network and its use to achieve e2e schedulability; and

\item \textbf{Capacity}. Characterization of the wireless capacity under the
schedulability constraint, and development of an efficient search algorithm to
maximize the capacity.

\end{itemize}
The rest of this paper proceeds as follows. Sec.~\ref{background}
provides the background on the baseband transport while Sec.~\ref{architecture}
presents our proposed transport design and its evaluation results in a simulated
network scenario. In Sec.~\ref{application}, we obtain the maximum wireless
capacity with end-to-end schedulability through our proposed algorithm. Finally,
Sec.~\ref{related} presents the related work and Sec.~\ref{conclusion} concludes
the paper.
\section{Wireless Baseband Transport}\label{background}

In this section we provide the background on baseband transport and processing,
and describe how real-time scheduling fits into its design.

\subsection{Network Primitives}

\begin{figure}
     \begin{minipage}[t]{0.48\columnwidth}
         \centering
         \includegraphics[ width=\columnwidth]{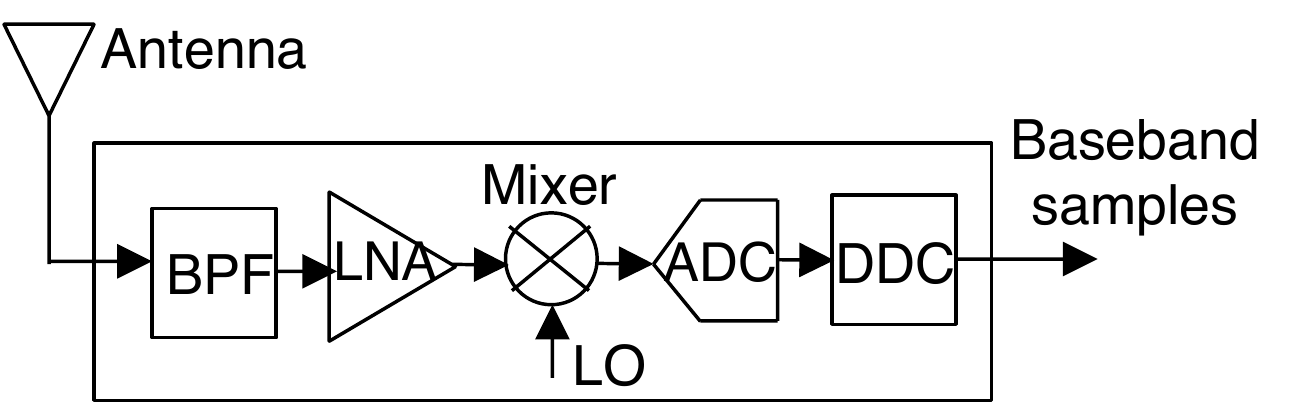}
         \caption{Radio front-end design for baseband conversion.
         \label{fig:distro:radio}}
     \end{minipage}
     \hspace{0.2cm}
     \begin{minipage}[t]{0.48\columnwidth}
         \centering
         \includegraphics[width=\columnwidth]{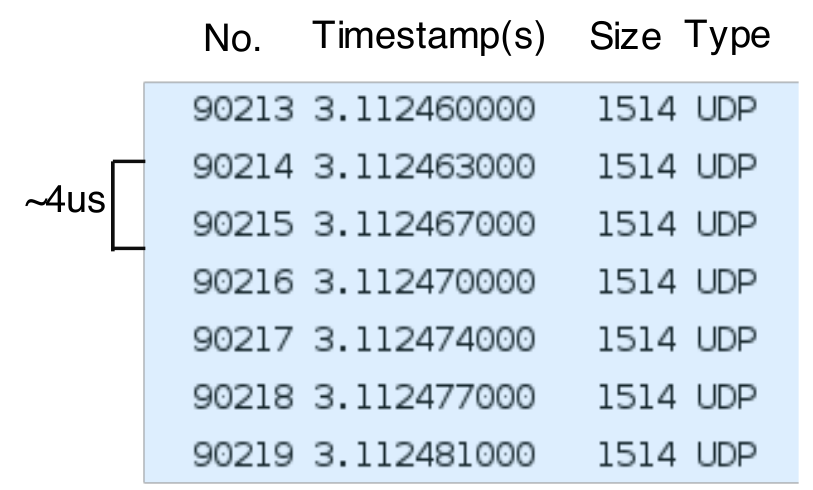}
         \caption{USRP2 transport log at 25MHz sampling rate. 
    \label{fig:distro:usrp}}
     \end{minipage}
 \end{figure}


The radio front-ends in a baseband network act as
converters between RF samples and complex (I and Q) baseband samples.
Fig.~\ref{fig:distro:radio} shows the components of such a radio. In the receive mode,
the RF signal is down-converted, filtered and passed through an analog-to-digital
converter (ADC) that gives out a digitized (e.g., 16-bit) stream of baseband samples.
This stream is broken into fixed-size blocks, which are then transported as payloads
in special-purpose packets generated with appropriate headers and tags.

Suppose there are $n$ radios in the network, where each radio is denoted by index $i \in \{1,\ldots, n\}$.
Let $f_i$ denote the desired sampling frequency from the ADC (achieved through decimation by
digital-down converters), and let $Q_i$ be the number of bits used to represent each
I (and Q) baseband sample. Then, the transport data rate (in bits/s) of radio $i$ can be expressed as:
\begin{align}
R_i = 2Q_if_i. \label{eq:distro:rate}
\end{align}
Further, let $B$ denote the fixed payload size (in bits) of a transport packet.
The inter-packet arrival time (in seconds) at radio $i$, assuming negligible packet
overhead, is given by:
\begin{align}
T_i = \frac{B}{R_i}. \label{eq:distro:arrival}
\end{align}
Since the ADC operates at a fixed frequency, and fixed-size blocks are used, the
packet inter-arrival time is a constant at each radio, which we refer to as the
\emph{period} of the arrival process.

\textbf{Example}:
USRP \cite{usrp} is a common software-radio platform that uses the UDP protocol for baseband transport.
Fig.~\ref{fig:distro:usrp} shows the timestamps from the transport log of a USRP2 running
at 25MHz sampling rate and payload size of 1492 Bytes. Using Eqs.~\eqref{eq:distro:rate}
and \eqref{eq:distro:arrival}, the packet inter-arrival time with 8-bit quantization
is calculated to be $2.98\mu$s. This is indeed close to inter-packet arrival time $\in[3,~4] \mu$s
observed from the USRP2 logs (Fig.~\ref{fig:distro:usrp}). Note that
the logged arrival period has a minimum 1$\mu$s resolution.

Wireless protocols have a fixed e2e processing deadline for PHY-layer
primitives such as channel sensing and decoding. Assuming a fixed (or worst-case)
processing time at the baseband processors, the e2e PHY deadlines impose a
maximum transport delay. Therefore, in order to support real-time processing, the
generated baseband packets from the radios must be transported to the baseband processors
within a fixed amount of time. That is, each radio $i$ has an e2e transport
delay bound, $D_i$, that the transport network must satisfy.

The traffic specification of radio $i$ is thus given by a 2-tuple $\tau_i = (T_i, D_i)$,
which represents the inter-arrival time, and the e2e delay bound, respectively.
The radio traffic is said to be {\em schedulable} if for all $1\leq i \leq n$,
the maximum delay experienced by packet of radio $i$ is not greater than the requested
delay bound $D_i$.

The transport network in large deployments of Cloud-RAN runs
over a fiber infrastructure such as dark fiber\cite{cran_survey}.
While in indoor environments, the radios may be deployed using high-capacity
Ethernet or Infiniband links \cite{argos}. In both scenarios, baseband samples are
exchanged through packet transmissions. Thus, the baseband transport network
can be modeled as a packet-switch network with one or more network switches.
Every packet in the network passes through multiple links, switches, and routers
before reaching its destination. While many routes can exist for a packet,
we assume fixed routing in the network, which is necessary for
e2e delay guarantees \cite{kandlur:91}.

Despite its advantages, packet-switching introduces various delays at each link
in a selected route. The e2e packet delay is the summation of delays over links
and switches along the selected route, which is composed of:
\begin{itemize}
\item Propagation delay ($t_p$): the time taken for the packet to reach the next switch;
\item Switching delay ($t_s$): the time taken for the packet to move from the ingress to egress port of a switch;
\item Transmission delay: the time needed to transmit the packet, which is a function of the link capacity; and
\item Queuing delay($t^q$): the waiting time of packet in a switch's egress queue.
\end{itemize}
\begin{figure}
    \centering
    \includegraphics[width=0.8\columnwidth]{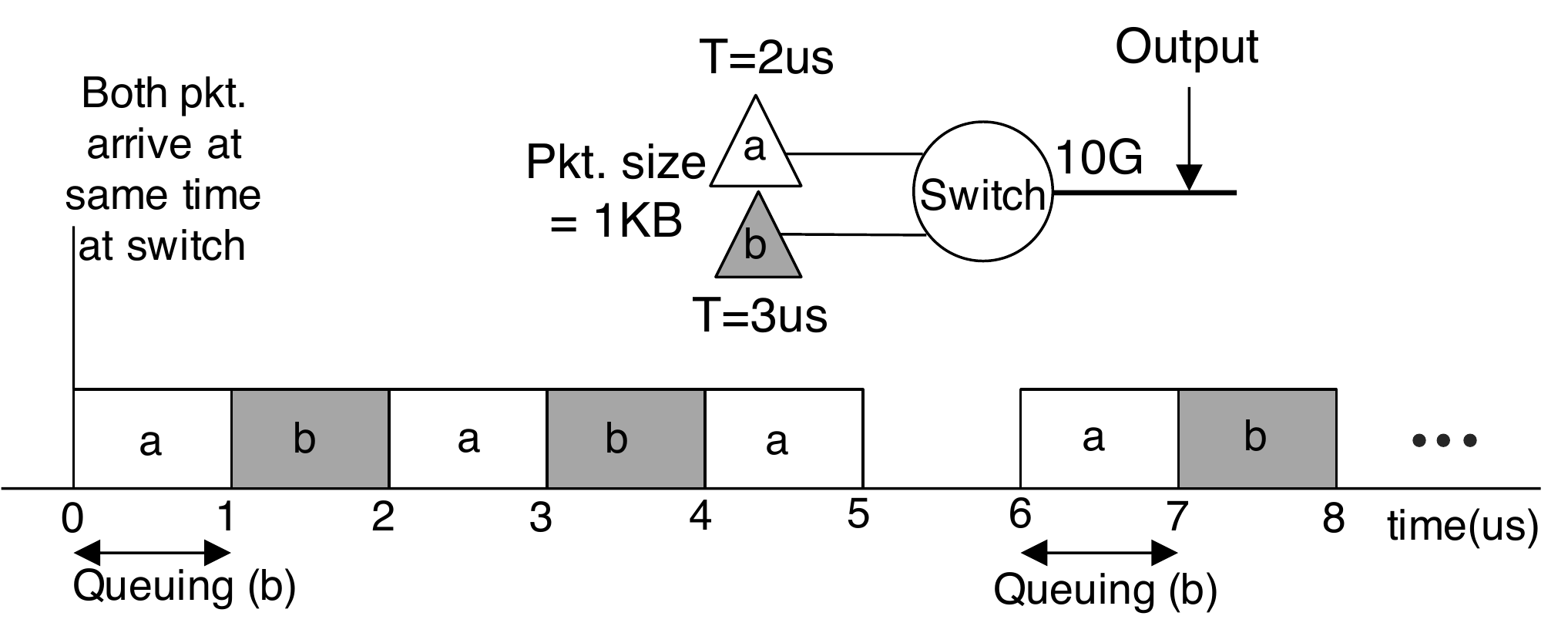}
    \caption{Switch FIFO output for two periodic flows. The output sequence is non-periodic.\label{fig:single_switch}}
\end{figure}
Among the different delays, the queuing delay is the only unknown that
can be different for each packet in the network. In general, it is a
function of the switch's scheduling policy and the input arrival sequence. One
needs to model these delays at each switching stage, which becomes intractable
for large networks as the output sequence from a switch is non-periodic even though packets
arrive periodically. For example, consider a simple baseband network (Fig.~\ref{fig:single_switch})
with periodic baseband traffic from 2 radios having inter-arrival times of $2\mu$s and $3\mu$s,
respectively. Assume the processing link capacity is 10Gbps and the packet size is 1000 bytes.
That is, the output transmission time of both flows is $1\mu$s. Assume packets from
the two radios arrive in the queue at the same time instant in the beginning.
Fig.~\ref{fig:single_switch} shows the timeline of the queue output. As one can see,
the inter-arrival time of the 3.3Gbps flow at the output is non-periodic (inter-arrival
times of 2$\mu$s and 4$\mu$s) that is induced from the waiting in the queue.
This non-linearity of queuing makes it difficult to model e2e packet arrivals,
a fact well-known in the literature \cite{hui:95}.

Common packet-switching techniques such as First-In-First-Out (FIFO) and
Round-Robin (RR), which are designed for best-effort traffic flows, are not suitable
for the real-time traffic that requires e2e delay guarantees. To support
real-time baseband traffic, the switches can use various scheduling policies
that were developed by real-time systems researchers \cite{survey:94, hui:95}.
Among them, the Earliest-Deadline-First (EDF) scheduling is a natural approach where each arriving
packet is assigned a deadline according to the requested delay bound, and the packet
with the earliest deadline is transmitted first. This scheduler is optimal \cite{liu_layland}
in the sense that if packets meet their deadlines using any scheduling policy, so
will they using EDF. While the original EDF considered
implicit deadlines (deadline is the same as the period) and preemption, one
can generalize it further to obtain both necessary and sufficient conditions for
schedulability with arbitrary deadlines. Theorem \ref{theorem:edf} (Appendix)
formally states these conditions for both cases -- with and without preemption.

Deadline scheduling is based on dynamic prioritization and thus difficult to
realize in practice. A more feasible approach is the fixed-priority scheduling
where each traffic flow is assigned a static priority \cite{cmu:90} where incoming packets
are transmitted in the order of their priority. Theorem \ref{theorem:fixed}
(Appendix) shows that there exists a schedulability test to determine whether the set
of traffic flows with given priorities meet their delay bounds. Furthermore, one can do
an iterative or offline search and use the schedulability criteria to arrive at
the feasible priority-assignment policy if one can be found \cite{tindell:94}. The
necessary conditions for schedulability, however, are known only under special
circumstances (e.g., when the deadline is less than or equal to the period).

\subsection{Transport Delay Bound}\label{delay_bound}

The wireless processing design depends on the wireless protocol and the target
architecture. For WiFi signals, where slots are $9\mu$s long, baseband samples
are streamed and decoded on the fly \cite{sora}. In contrast, LTE has 1$ms$-long
subframes that are decoded on an accumulated 1$ms$ buffer of baseband samples
\cite{oai:15}. The time required to decode the baseband samples (or frames) depends
on the capability and the optimizations of the platform. In this paper, we assume
the processing time is fixed, and focus on the transport delays.

The transport delay bound of radio $i$ is computed by subtracting the maximum
processing time, $T_{proc}$, from its end-to-end protocol deadline, $T_{prot}$,
as:
\begin{equation}
D_i = T_{prot} - T_{proc}.
\end{equation}
In case of WiFi signals, the protocol deadline, $T_{prot}$, can be 4$\mu$
(since CCA assert should occur within $4\mu$s during energy sensing \cite{80211ac}).
Assuming $T_{proc} = 2\mu$s to perform sample summation, this results in a $2\mu$s
delay bound for the transport network. On the other hand, for LTE signals, $T_{prot} = 2$ms,
as there is no channel sensing, and the protocol deadline is governed by the HARQ process.
Consequently, the delay bound is much larger (0.5--0.7 ms) than the WiFi case.

It is worth noting that the transport delay bound, $D_i$, is not always fixed
but can vary with the number of radios, $n$, since the processing time, $T_{proc}$,
typically increases with $n$.
For instance, $T_{proc} \propto O(n^3)$, when decoding $n$ spatially multiplexed signals
\cite{bigstation:13}.
\section{DISTRO DESIGN AND DELAY ANALYSIS}\label{architecture}

We now describe the construction, requirements, and delay analysis of
the proposed real-time transport design for baseband traffic.


\begin{figure}
    \centering
    \includegraphics[width=0.9\columnwidth]{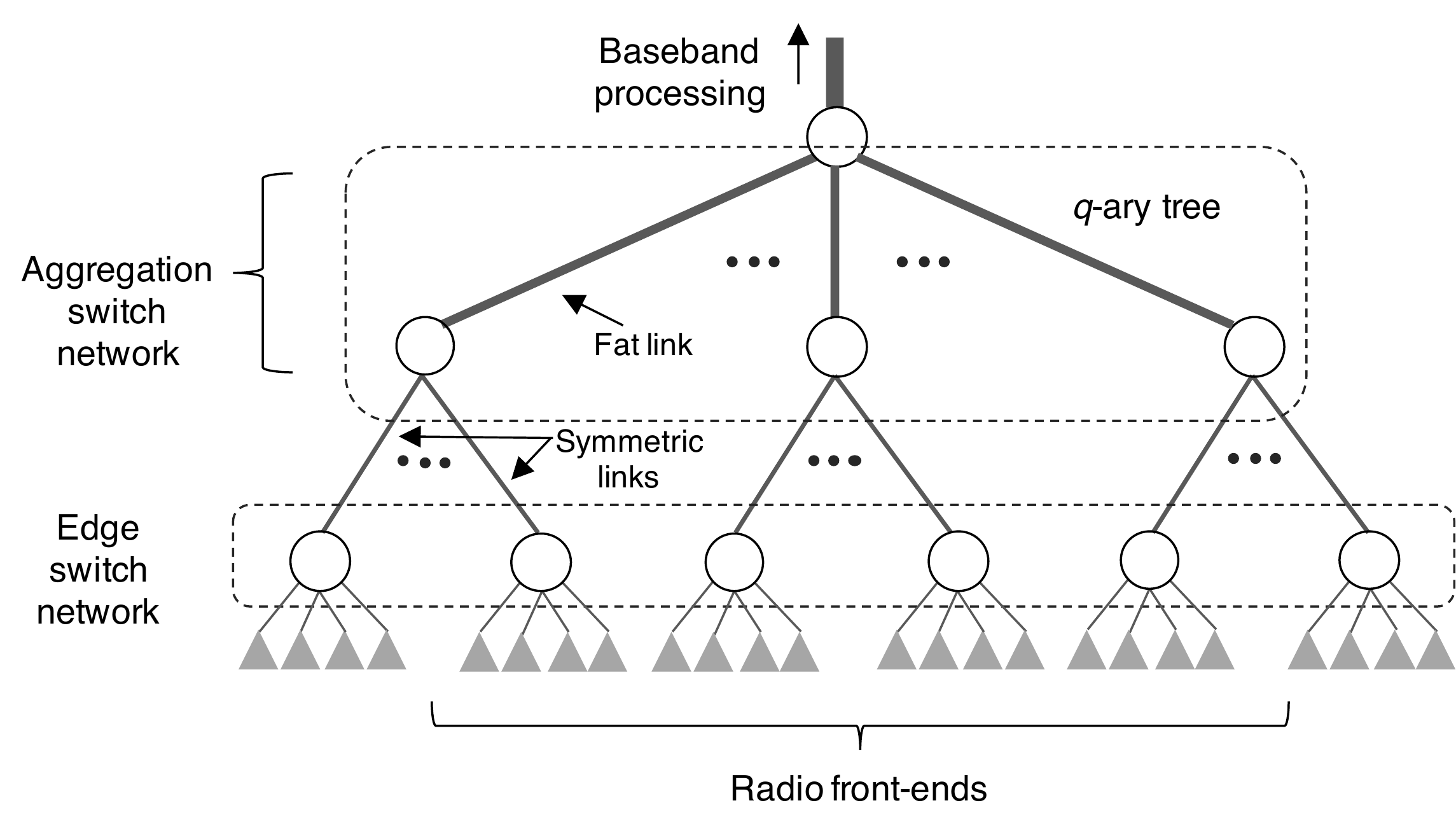}
    \caption{Fat-Tree architecture of \algnameD connecting radio front-ends
    with fat and symmetric links.\label{fig:tree_main}}
\end{figure}

\subsection{Design Philosophy}

The design of a baseband network is driven by two key observations. First, the
baseband traffic flows exclusively between the radios and the processor pool
(cross-traffic between radios is negligible). Second, depending on the application,
baseband traffic is aggregated into one or more links for processing.
In addition, the baseband transport operates in real time: given
the traffic flows and delay bounds, we must be able to give a sufficient, if not
necessary, condition to check their e2e schedulability.


We propose \algnameD, a baseband network design that combines the tree structure
with real-time scheduling. Fig.~\ref{fig:tree_main} illustrates the proposed design
in a deployment of heterogeneous radios. The radios are connected to edge switches,
and the links from the edge switches are aggregated at multiple levels till the
root switch. The destination is assumed to be located at the root switch,
which is the common aggregation point for the baseband packets. The traffic
destination is assumed to be a physical point that is connected to a pool of
baseband processors.

\algnameD's design accommodates an increasing number of radios
without severely affecting their delay performance, and makes the
schedulability flexible to the addition and removal of radios. This is achieved
by partitioning the baseband network into two components: edge- and aggregation-switch
network. The edge switch network contains the edge switches that form the first entry
point of a baseband flow. From a deployment standpoint, each edge switch could connect
a group of radios that are in physical proximity of each other, for instance, a
basestation site in a cellular network. The aggregation network contains all the remaining
switches except the edge switches, and its purpose is to aggregate traffic from edge
towards the destination.


\subsection{Design Requirements}

The aggregation network in \algnameD is a logical tree of links and switches.
To simplify the schedulability analysis, we place the following restrictions on its design.


\noindent \textbf{1) Fat-Tree}. A tree is a Fat-Tree if for every switch in the
tree, the switch's uplink capacity is greater than or equal to the sum capacity
of the incoming links.



\noindent \textbf{2) Symmetric}. A tree is symmetric if for every switch, interchanging
the incoming links yields the same tree.

\noindent \textbf{3) Non-preemptive}. The network switches always use a non-preemptive
scheduling policy, i.e., an ongoing packet transmission is never preempted.

\noindent \textbf{4) Equal packet sizes}. The basesband packet sizes in the
network are always equal irrespective of their source.


\subsection{End-to-End Guarantees}\label{sec:guarantees}

As noted earlier, a packet can experience queuing delay at any of the switches 
in the network. Beyond the edge switch, characterizing the queuing behavior becomes 
difficult as the packet arrivals are no longer periodic \cite{hui:95}. However, 
under our symmetric Fat-Tree construction, we can readily bound the maximum 
queuing time.

Assume $K$ edge switches, and let $S_i, i = 1,2, \ldots K$, denote the set of
radios connected to edge switch $k$ where $S_k \subseteq \{1,2 \ldots, n\}$.
For simplicity, assume at time $t=0$ packets from radios of each edge switch
are queued at the switch, and the packets arrive periodically at the switch thereafter.
Further, assume no transmission or propagation delay from the radio to
the edge switch. Let $C_1$ denote the transmission time of a packet on the link
connecting the edge switch and the next aggregation switch. Since packet size, $B$,
is fixed for the network, every baseband packet has the same transmission time
going through the edge switch.

\begin{figure}
    \centering
    \includegraphics[width=0.7\columnwidth]{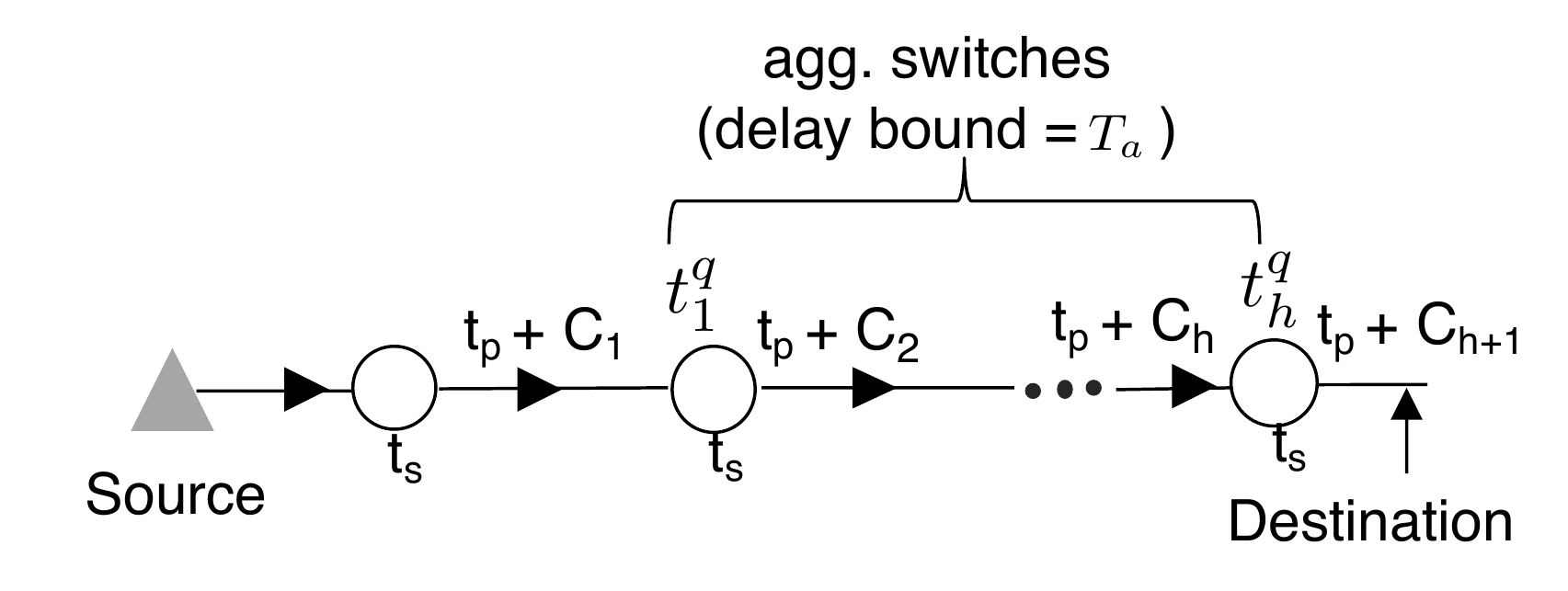}
    \caption{Delay path of baseband packet from the source to the destination 
    including queuing delays.\label{fig:path}}
\end{figure}


%
\begin{theorem}\label{theorem:distro}
In a $q$-ary symmetric fat-aggregation-tree of height $h$, the radio traffic $\tau_i = (T_i, D_i), i = 1,2,\ldots, n$,
is schedulable, if for every $k, 1 \leq k \leq K$, the set of traffic flows
$(T_i, d_i^{'}), \forall i \in S_k$, with transmission time $C_1$ and no preemption,
is schedulable at edge switch $k$, where $d_i' = D_i - \frac{1-q^{-h}}{1-q^{-1}}C_1 - (h+1)(t_s + t_p)$,
$\forall i$.
\end{theorem}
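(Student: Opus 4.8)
The plan is to decompose the end-to-end delay of an arbitrary packet of radio $i$ into per-hop contributions and bound each one separately, so that their sum is at most $D_i$. Along the fixed route from a radio to the root the packet crosses one edge switch and $h$ aggregation switches; at each switch it incurs a switching delay $t_s$, a queuing delay, and a transmission delay, and each of the $h+1$ outgoing links adds a propagation delay $t_p$. First I would isolate the $(h+1)(t_s+t_p)$ of fixed switching and propagation terms, which are accounted for exactly by that summand in $d_i'$. What remains is the edge-switch delay (queuing plus the first-link transmission time $C_1$) and, at each aggregation level, the queuing plus transmission delay. The goal then reduces to showing that the sum of the aggregation-level delays is at most $\frac{1-q^{-h}}{1-q^{-1}}C_1$, after which the hypothesis at the edge closes the argument.

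For the edge term I would invoke the hypothesis directly: by assumption the flow set $(T_i,d_i')$, $i\in S_k$, with transmission time $C_1$ and no preemption, is schedulable at edge switch $k$. By the definition of schedulability, the worst-case queuing-plus-transmission delay suffered at the edge by a packet of radio $i$ is therefore at most $d_i'$. Since $d_i'$ already subtracts both the aggregation budget and the $(h+1)(t_s+t_p)$ term from $D_i$, it suffices to certify that the aggregation contribution never exceeds that budget, regardless of how the (now non-periodic) stream enters the aggregation tree.

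The heart of the proof is a per-switch queuing bound. Using the symmetric Fat-Tree structure, the output link of a level-$\ell$ aggregation switch has capacity equal to the sum of its $q$ incoming links, so its transmission time is $t_\ell^{\mathrm{out}} = q^{-\ell}C_1$ while its incoming links have transmission time $t_\ell^{\mathrm{in}} = q^{-(\ell-1)}C_1 = q\,t_\ell^{\mathrm{out}}$. The key claim is that, under FIFO with equal packet sizes and no preemption, a packet arriving at such a switch finds at most $q-1$ packets ahead of it, so its queuing-plus-transmission delay is at most $q\,t_\ell^{\mathrm{out}} = t_\ell^{\mathrm{in}} = q^{-(\ell-1)}C_1$. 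To prove the claim I would use that each incoming link is the serial output link of a lower switch, hence consecutive packets on it are separated by at least $t_\ell^{\mathrm{in}}$; thus at most $q$ packets can arrive across all $q$ inputs in any window of length $t_\ell^{\mathrm{in}}$, while the output serves exactly $q$ packets per such window. A busy-period counting argument then caps the backlog at $q$, i.e. at most $q-1$ packets precede the tagged one. Establishing this backlog bound rigorously against non-periodic, possibly bursty arrivals is the main obstacle; the Fat-Tree rate matching (arrival rate $\le$ service rate at every switch) together with the per-link serialization is exactly what rules out unbounded or larger-than-$q$ pileups.

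Finally I would sum the per-level bounds as a geometric series, $\sum_{\ell=1}^{h} q^{-(\ell-1)}C_1 = \frac{1-q^{-h}}{1-q^{-1}}C_1$, and add the edge bound $d_i'$ and the fixed term $(h+1)(t_s+t_p)$ to obtain a total end-to-end delay of at most $d_i' + \frac{1-q^{-h}}{1-q^{-1}}C_1 + (h+1)(t_s+t_p) = D_i$. Since this holds for every radio $i$ and every packet, the traffic is schedulable. The symmetry assumption is what lets me treat all branches uniformly (a single $t_\ell^{\mathrm{out}}$ per level and a common depth $h$), and the equal-packet-size and non-preemption assumptions are what keep both the edge schedulability test and the per-switch backlog count clean.
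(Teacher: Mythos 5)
Your proposal is correct and follows essentially the same route as the paper's proof: the same delay decomposition into $(h+1)(t_s+t_p)$ fixed terms plus an edge-switch term closed by the schedulability hypothesis, the same per-level bound of at most $q-1$ packets ahead (queuing plus transmission at level $\ell$ at most $q^{-(\ell-1)}C_1$, justified by the Fat-Tree rate matching, equal packet sizes, and the per-link serialization spacing), and the same geometric-series summation to $\frac{1-q^{-h}}{1-q^{-1}}C_1$. The backlog-counting step you flag as the main obstacle is handled in the paper at the same level of informality, so there is no gap relative to the published argument.
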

\begin{proof}\let\qed\relax


We show the proof by construction. Let the transmission time sequence as a 
packet traverses from the edge to the destination be represented 
as $C_1, C_2, \ldots, C_{h+1}$, where 
$h$ is the height of the aggregation tree. We denote $h=1$ if the network 
contains only one aggregation switch.

From the symmetric Fat-Tree definition, it holds that if $C_j$ is the transmission
time on the incoming link, then the transmission time on the aggregation link, 
$C_{j+1}$, for all $j, 1 \leq j \leq h$, satisfies:
\begin{equation}
 C_{j+1} \leq C_{j}/q. \label{eq:fatness}
\end{equation}
Assume the edge switches use some non-preemptive scheduling policy whereas 
the aggregation switches use FIFO scheduling. Under non-preemptive scheduling,
the inter-arrival time of packets on the outgoing link of the edge switch is 
equal to or larger than $C_1$. Consider the first aggregation switch: the 
outgoing transmission time is $C_2 (C_2 \leq C_1/q)$. The maximum queuing time 
of a packet at the switch is $(q-1)C_2$ which occurs when $q$ packets, one packet 
from each link, arrive at the same time. Note that since the inter-arrival time 
$C_1 \geq qC_2$, and FIFO scheduling is used, two packets from the same link cannot 
be transmitted unless queued packets from all the other links are transmitted.

This is illustrated in Fig.~\ref{fig:equal} in the case of two links, where 
packet $c$ arrives at the same instant as packet $b$, and is blocked by the 
transmission time of packet $b$.

Continuing along the path in Fig.~\ref{fig:path}, for every $j$, packets in 
incoming links with transmission  time $C_j$ and inter-arrival time greater 
than $C_j$, will have a maximum queuing delay: 
\begin{equation}
 t^{q}_{j} \leq (q-1)C_{j+1}. \label{eq:max_queing}
\end{equation}

Thus, the maximum total delay of \emph{any} baseband packet across the
aggregation network starting from the edge switch is bounded by:
\begin{eqnarray}
T_a &=& \sum_{j=1}^{h} ( t_s + C_{j+1} + t^{q}_{j} + t_p)\\
&\leq& \sum_{j=1}^{h} ( t_s + C_{j+1} + (q-1)C_{j+1} + t_p)\\
&=&h(t_s + t_p) + \sum_{j=1}^{h} qC_{j+1}\\
&\leq&h(t_s + t_p) + \sum_{j=1}^{h} q\frac{C_1}{q^{j}} \\
&=& h(t_s + t_p) + \frac{(1-q^{-h})C_1}{1-q^{-1}} \label{eq:max_delay}
\end{eqnarray}
where we use the inequality $C_{j+1} \leq C_{j}/q\leq \ldots \leq C_{1}/q^{j}$ 
using Eq.~\eqref{eq:fatness}.

Therefore, the aggregation delay bound obtained is independent of the arrivals at
in the network. Now, in order for a baseband packet of radio $i$ to meet its e2e delay bound
$D_i$, the total packet delay at its edge switch must be less than $D_i$ less the
aggregation delay bound. Adding the switching and propagation delays of the
edge switch ($t_p + t_s$), we arrive at the e2e schedulability of all baseband flows 
in the network by checking the schedulability of baseband traffic at each edge switch.\qedwhite
\end{proof}
\noindent Theorem \ref{theorem:distro} provides only a sufficient condition for schedulability,
but does not affirmatively tell us if a given set of traffic flows are schedulable.
Nevertheless, the result is still useful as it allows us to construct a transport network
that guarantees to meet the requested e2e delay bounds. Furthermore, it
gives the scheduling policies that the network must implement: non-preemptive
scheduling at edge switches, and regular FIFO scheduling at aggregation switches.

The edge switches can implement a non-preemptive scheduler which
can either be EDF or fixed-priority. The fixed-priority scheduler is easy to realize
with multiple outbound queues at a switch. Each incoming packet is first classified
and placed in its corresponding queue, and the queues are dequeued in the order
of their priority.

\begin{figure}[t]
    \centering
    \subfigure[Equal tx times]{\label{fig:equal}\includegraphics[width=0.48\columnwidth]{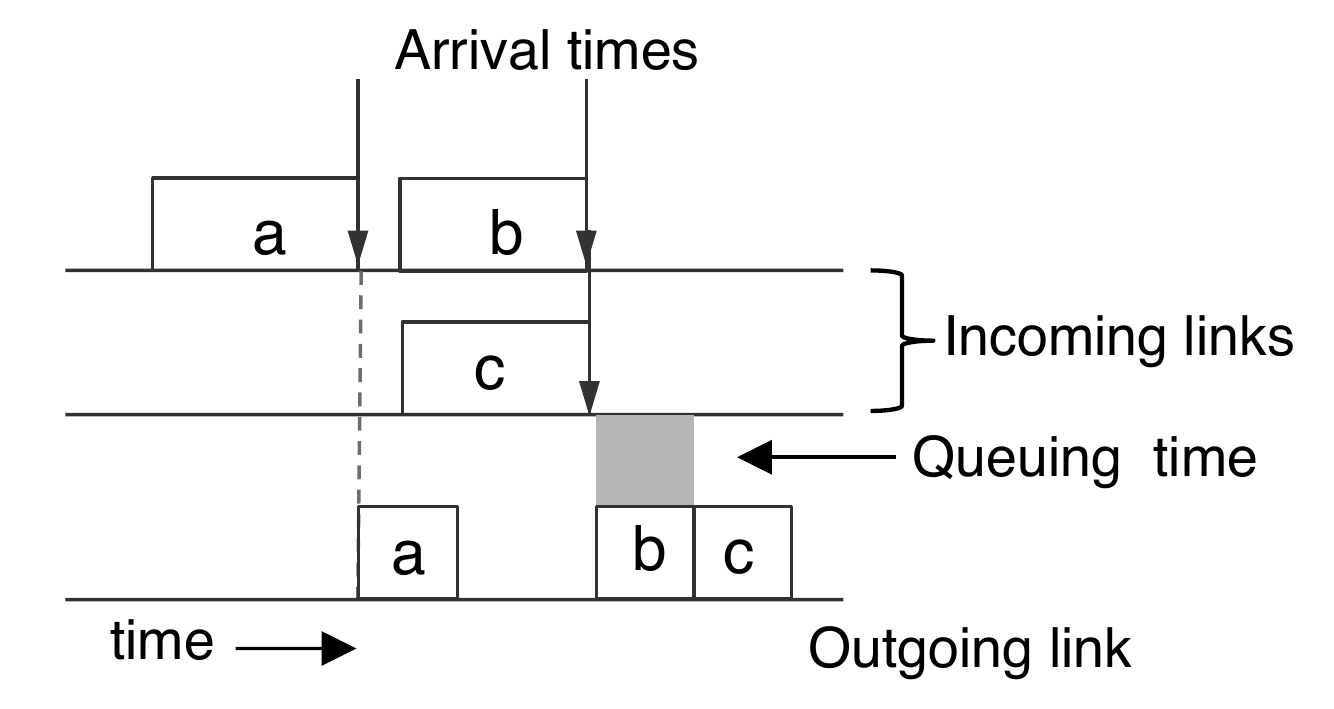}}
    \subfigure[Unequal tx times]{\label{fig:unequal}\includegraphics[width=0.48\columnwidth]{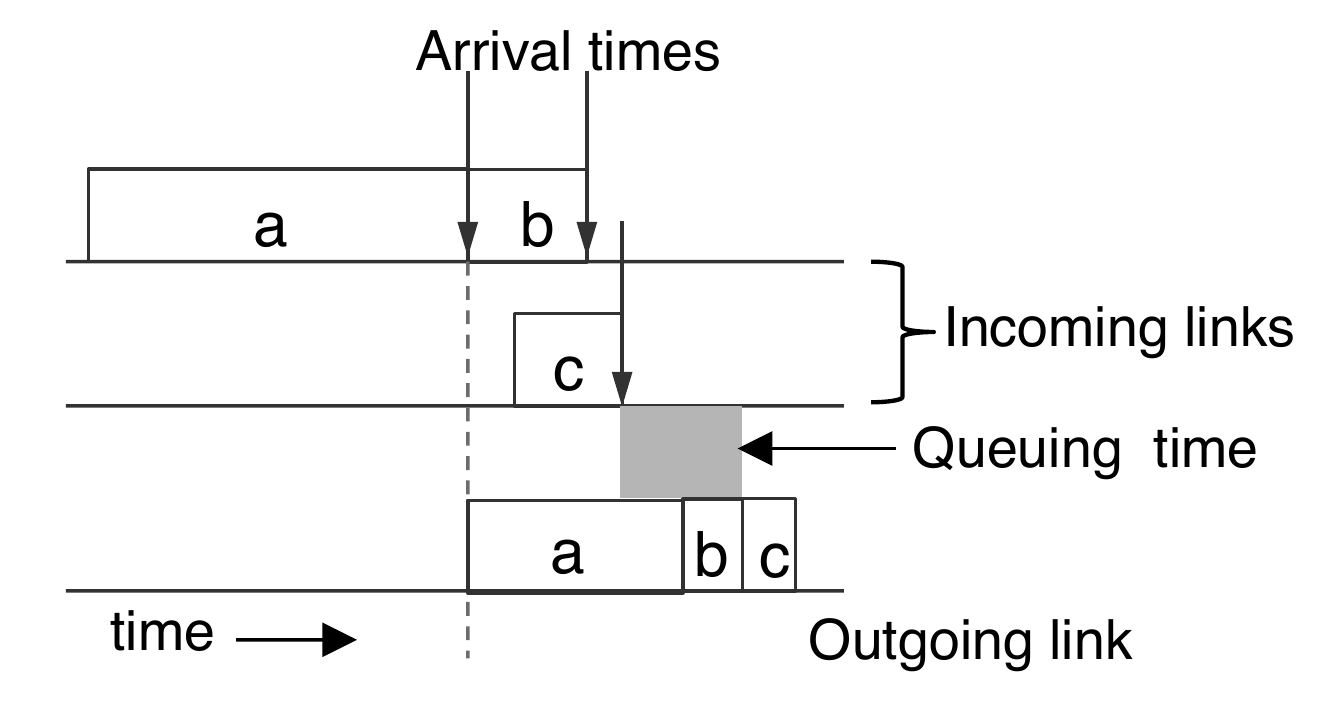}}
    \caption{Blocking of $c$ by at most one packet (left) and more than one packet (right).\label{fig:blocking}}
\end{figure}

The reason for restricting our construction to a symmetric Fat-Tree and non-preemptive
scheduling is simple: the packet transmission (tx) times are equal for the incoming links,
and therefore, the maximum queuing delay in the aggregation network is easy
to obtain. This is not the case, however, if we assume unequal transmission times,
which happens if unequal links are used or when preemption is allowed. For instance,
Fig.~\ref{fig:unequal} shows the blocking of a smaller packet, $c$, due to the
transmission of earlier packets. This example is akin to the head-of-line (HOL)
blocking problem that occurs when subsequent transmissions are held up by the first
transmission \cite{hol}. Then, to arrive at the maximum queuing time in the network,
one must examine different arrival sequences and their transmission times at each
switch, which can quickly become intractable.

\subsection{Scalability}

The design of \algnameD is scalable on three fronts. First, as the number of
radios in the network, $n$, grows, and the number of edge switches, $K$, increases,
the maximum total delay of a packet increases only logarithmically. To see this,
one can upper bound the total aggregation delay in Eq.~\eqref{eq:max_delay}
as follows:
\begin{equation}
T_a < h(t_s + t_p) + 2C_1.
\end{equation}
The delay bound grows linearly with the height, $h$, which is always less than
or equal to $\lceil \log_2 K\rceil$. For large packets (e.g., Jumbo Ethernet frames),
the transmission time is much larger than the switching and propagation time.
In this case, the delay scaling factor, $t_s+t_p$, becomes even less significant.

Second, \algnameD supports heterogeneous baseband radios with differing periods
and deadlines. There is no restriction on the type of radio protocol. This aids
the deployment of a radio access network over multiple wireless standards,
such as LTE and WiFi, at the same physical location.

Finally, each edge switch, $k$, implements a schedulability test locally that is
dependent only on the traffic generated from the  set of radios, $S_k$, connected
to it. Any addition or removal of radios requires one to check only local schedulability,
without disturbing the performance of other flows. This feature enables
incremental deployment of the baseband network.

\subsection{Run-time Scheduling}

The aggregation network in \algnameD implements the default FIFO scheduling.
On the other hand, each edge switch implements a non-preemptive scheduler which
can either be EDF or fixed-priority. The fixed-priority scheduler is easy to realize
with multiple outbound queues at a switch. Each incoming packet is first classified
and placed in its corresponding queue, and the queues are then dequeued in the order
of their priority.

\emph{Background Traffic}. Background traffic such as control messages can disrupt
the real-time performance of the network. To ensure minimum disruption, we
segregate the baseband traffic from the rest of the traffic through flow prioritization.
Background and rest of the traffic are placed in a separate queue,
which has lower priority than baseband traffic. Since background traffic
adds to the non-preemptive delay at each switch, we update the delay bound in
Eq.~\eqref{eq:max_delay} accordingly.

\subsection{Evaluation}\label{evaluation}

We implement and evaluate \algnameD's Fat-Tree architecture using NS-3\cite{ns3}.
NS--3 is a discrete-event network simulator that accurately simulates network
traffic in large deployments. Table~\ref{table:simu_params} shows the simulation
parameters used in our setup. Each edge switch connects 4 heterogeneous radios that
have different flow rates (1, 1.5, 2 and 2.5G) but fixed packet size of 1000 bytes.
We simulate a fat-aggregation-tree with three levels: 1) a core switch that is connected
to the destination through 200Gbps link; 2) $q$ aggregation switches connected to
core switch with 40Gbps links; 3) $q$ edge switches connected to each aggregation
switch with 10Gbps links. Thus, the total number of radios in our setup is $4q^2$.

\begin{table}
\centering
\small{
    \begin{tabular}{|c|c||c|c|}
    \hline
    $t_s$ & 50ns &  Edge--src.   & 10Gbps   \\
    \hline
    $t_p$  & 10ns  & Agg.--Edge & 10Gbps   \\
    \hline
    Pkt. size ($B$) & 1KB & Agg.--Core & 40Gbps \\
    \hline
    $q$-ary tree & 2,3,4  & Core--Dest.  & 200Gbps\\
    \hline
    height ($h$) &  2  & Flow rate ($R_i$) & 1, 1.5, 2, 2.5 Gbps\\
    \hline
    Radios/edge & 4  & Simulation Time & 1sec\\
    \hline
    \end{tabular}
    }
\caption{Simulation parameters}
\label{table:simu_params}
\end{table}
%
%
%
\begin{figure}[t]
    \centering
    \subfigure[Maximum Delay \label{fig:radio_delays_3}]{\includegraphics[width=0.9\columnwidth]{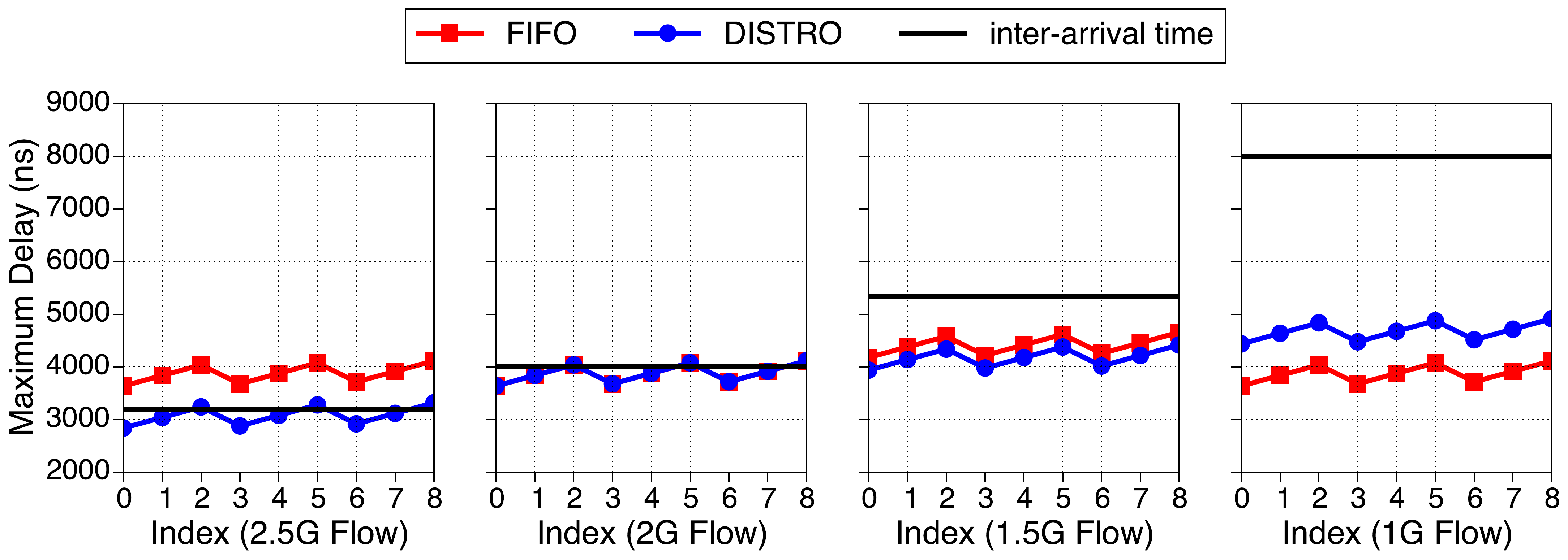}}
    \vspace{1mm}
    \subfigure[Jitter]{\includegraphics[width=0.9\columnwidth]{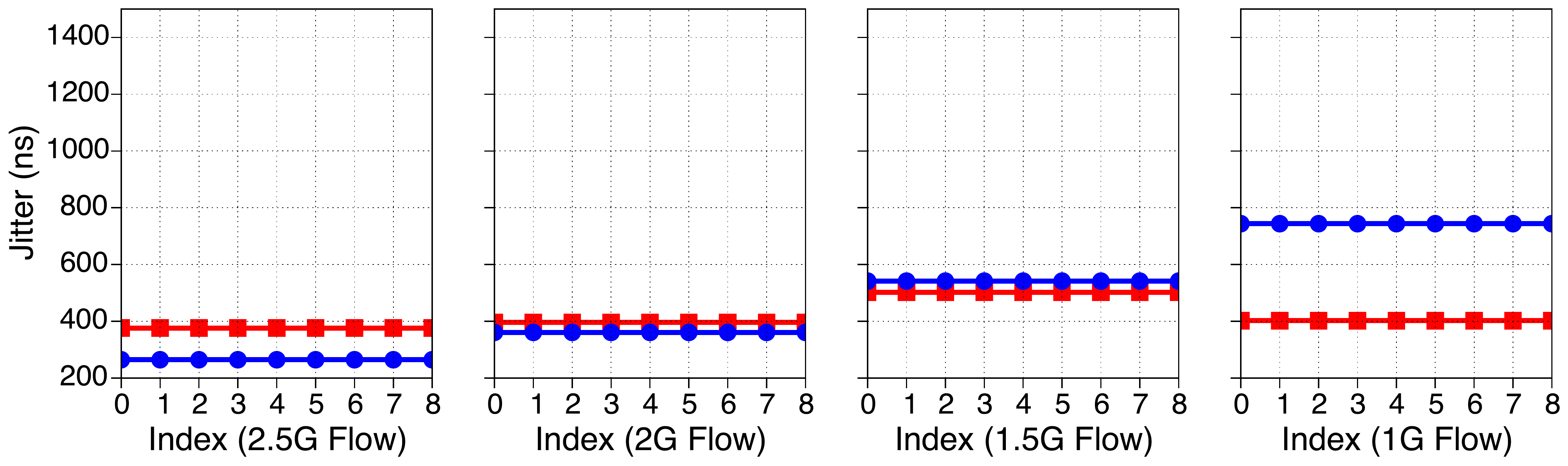}}
    \vspace{-5pt}
    \caption{Delay metrics for flows in a 36-radio setup. \algnameD
    meets the e2e guarantees of all flows.\label{fig:radio_delays}}
\end{figure}

We use the NS--3 packet tagging mechanism to tag each baseband packet with a
priority level. The priority levels are chosen to achieve e2e schedulability
(Sec.~\ref{sec:guarantees}). We then implement a packet-classifier at the edge switch
that classifies the incoming packet and places it in one of the 4 egress queues. For
dequeuing, the queues are searched in decreasing order of their priority.
For simplicity, we assume the transport delay-bound is same
as the inter-arrival period of each flow. Under this assumption, the priorities
are determined by the inverse of the period (equivalent to a rate-monotonic scheduler).

Fig.~\ref{fig:radio_delays} shows the maximum e2e transport delays and
jitter observed in a 3-$ary$ tree with 36 radios. Here, we compare \algnameD's
scheduling with the basic FIFO packet scheduling. As seen from Fig.~\ref{fig:radio_delays_3},
\algnameD meets the e2e guarantees (delay $<$ inter-arrival time) of each flow while
the FIFO scheduling misses the delay bound of the 2.5G flow. In prioritized scheduling,
the 1G flow has the least priority and therefore sees an increase in its transport delay.

\begin{figure}[t]
    \centering
    \includegraphics[width=0.8\columnwidth]{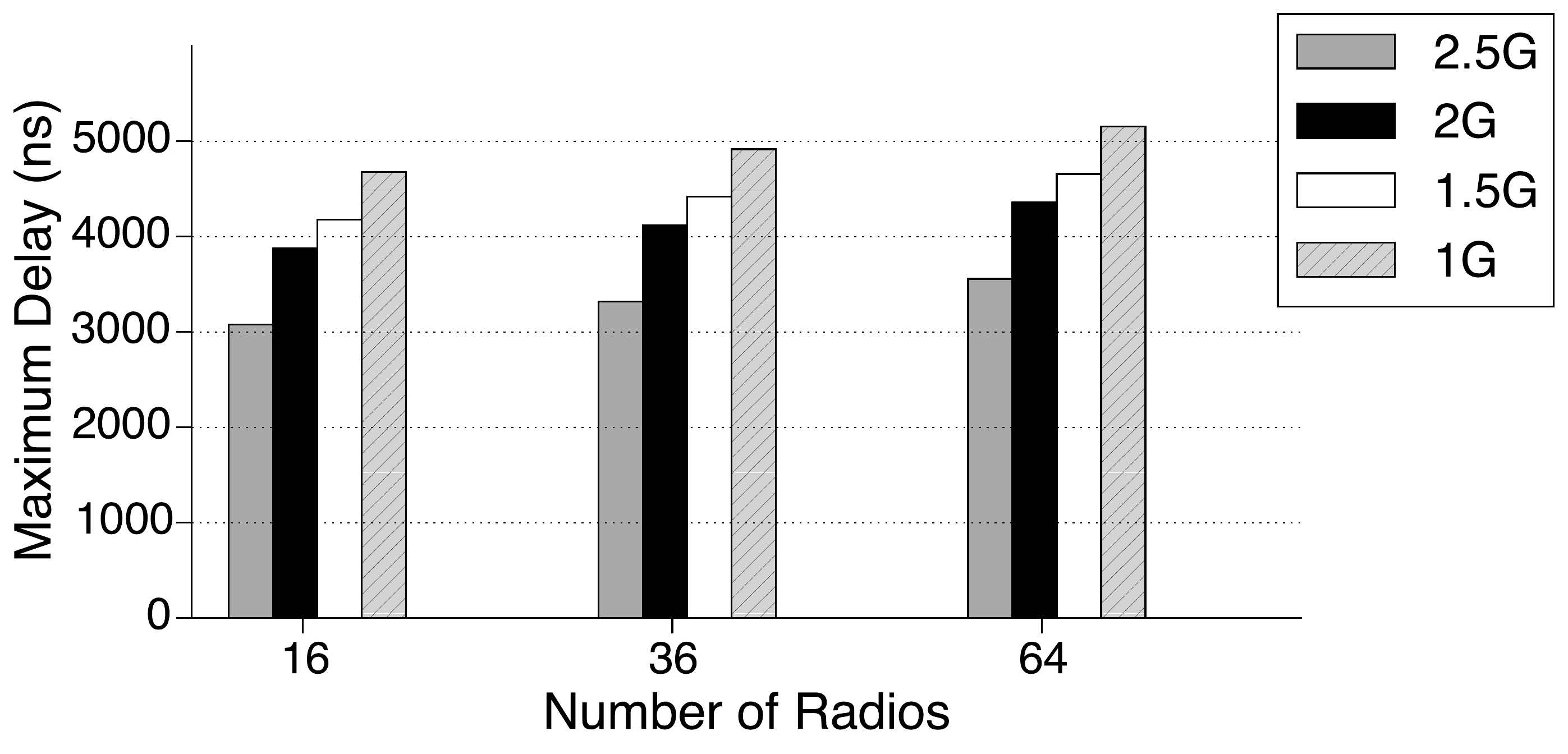}
    \caption{Maximum e2e delay of baseband traffic flows with increasing scale of the network.\label{fig:scale_delay}}
\end{figure}

In Fig.~\ref{fig:scale_delay}, we show the maximum observed flow delays as we increase
the number of radios in the network. Assuming a fixed number of radios per
edge switch, the scaling is achieved by using $q^2$ edge switches, and $q$ aggregation
switches for a $q$-ary aggregation tree. The core switch capacity, however, is
fixed at 200Gbps. Despite quadrupling the number of radios (from 16 to 64),
the maximum delay in the network increases not more than 300$ns$ (a less than 10\%
increase). This is attributed to the maximum delay bound that is dependent only
on the height of the tree, and not on the number of radios. Note that this
constant increase is not always the case, for example, in daisy-chained radio
network in ARGOS \cite{argos}, the maximum delay grows linearly with the number of radios.

\section{Achievable Capacity Under E2E Schedulability}\label{application}

In this section, we characterize the achievable wireless capacity under the
constraint of e2e schedulability of baseband transport. We consider a basic
implementation of \algnameD with single aggregation switch, $q$ edge switches,
and MIMO decoding of spatially-multiplexed baseband samples from $n$ radios.
The specifics of synchronization, buffering and decoding are omitted but are
assumed to manifest through the requested e2e delay bounds (Sec.~\ref{delay_bound}).


Assuming everything else is fixed, the ADC quantization determines the flow rate
of a radio. It also affects the wireless capacity through quantization noise and
thus is the search parameter to maximize capacity while ensuring schedulability.

Fig.~\ref{fig:radio_model} shows our system model where samples received by radios
over a wireless channel are transported for decoding. Let $\ve{x} \in \mathbb{C}^{m\times 1}$
be the signal vector sent by the transmitter, and let $\ve{y} \in \mathbb{C}^{n\times 1}$
be the received signal vector, where $m$ is the number of transmit antennas,
and $n$ is the number of radios. Let $\ve{H} \in \mathbb{C}^{n\times m}$ represent
the wireless channel between the transmitter and the radios.

\begin{figure}[t]
    \centering
    \includegraphics[width=0.7\columnwidth]{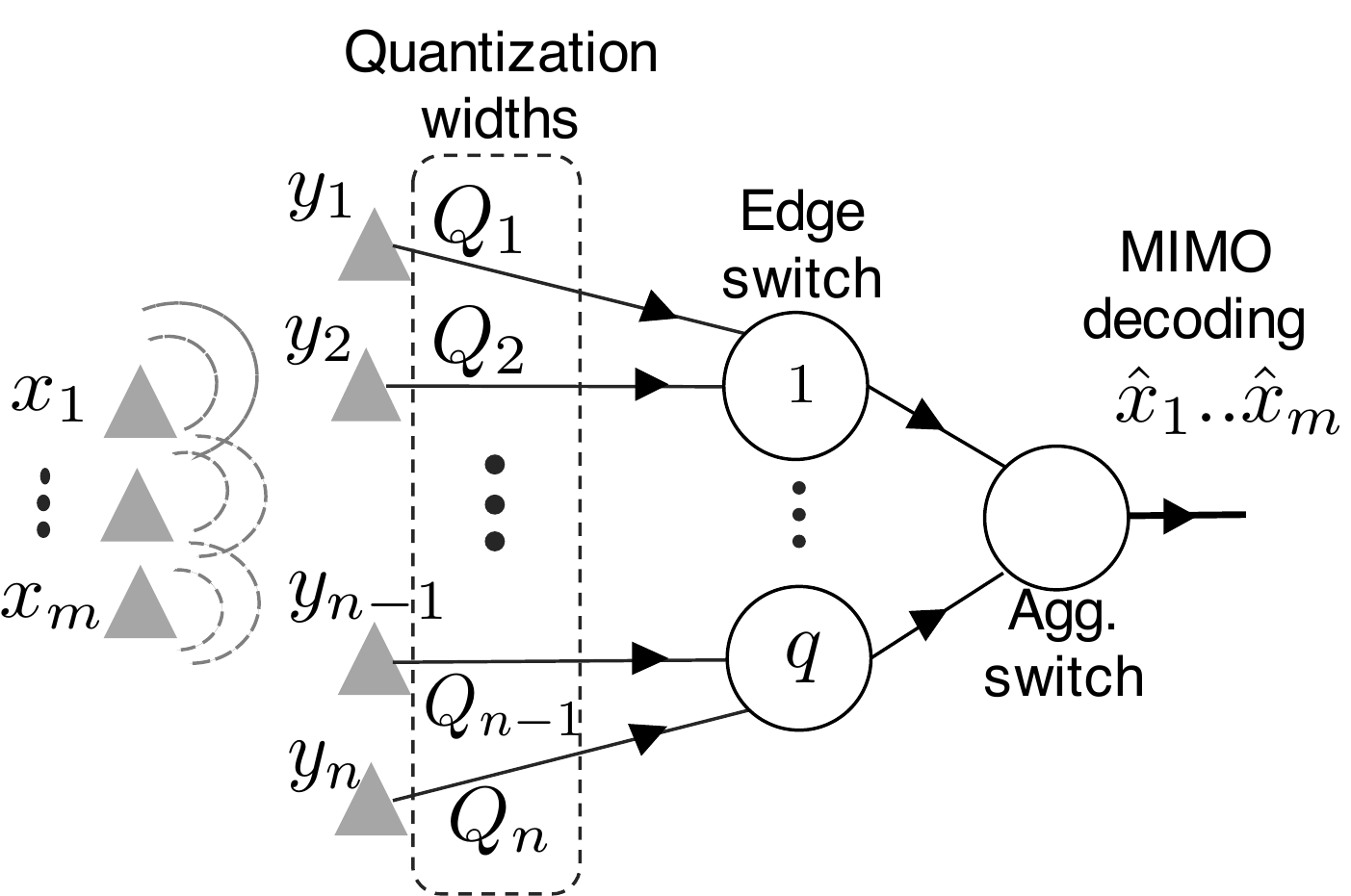}
    \caption{Model for calculating wireless capacity with e2e scheduling using
    ADC quantization width as search parameter.\label{fig:radio_model}}
\end{figure}

\textbf{Quantization}. Assume radio $i$ has ADC quantization width,
$Q_i$, that takes an integer value from the set $\mathcal{L} = \{L_1, \ldots, L_d\}$,
where $L_1$ and $L_d$ are the minimum and the maximum quantization widths, respectively.
Further, let $\ve{Q} = [Q_1, \ldots, Q_n]$, and let $\gamma(Q_i)$ be the average quantization
noise power injected into the baseband samples, where $\gamma(\cdot)$ is the quantization
noise function.

\textbf{Transport}. Assume a fixed size $B$ of baseband packets. From Eq.~\eqref{eq:distro:rate} and
Eq.~\eqref{eq:distro:arrival}, the inter-arrival time at radio $i$ is $T_i = \frac{B}{2Q_if_i}$, which
is a function of the radio quantization width, $Q_i$.

\textbf{Model}. Assuming a narrow-band channel, the received signal can be expressed as:
\begin{equation}\label{eq:sig_mimo}
\ve{y} = \ve{H}\ve{x} + \ve{z} + \ve{z}_Q,
\end{equation}
where $E[\ve{x} \ve{x}^H] = \rho \ve{I}_{m}$, $\rho$ denotes the transmitted power that
is equal across all symbols, $\ve{z} \sim \mathcal{C}\mathcal{N}(0,\sigma^2\ve{I}_{n})$ is
the additive complex white gaussian noise, and $\ve{z}_Q$ represents the quantization noise vector.
We assume the effect of quantization manifests through the additive term $\ve{z}_{Q}$, which is
approximated as a zero-mean complex Gaussian noise with covariance
matrix $\ve{\Sigma}_Q = \text{diag}(\gamma(Q_i), \ldots, \gamma(Q_n))$.
Let $\ve{\Sigma} = \sigma^2\ve{I}_{n} + \ve{\Sigma}_Q$ denote the equivalent noise covariance matrix.

The ergodic wireless capacity (in b/s/Hz) for a fixed $\ve{Q}$, with imperfect
channel knowledge at the transmitter, is \cite[Eq. 20]{mimo_capacity}:
\begin{align}\label{eq:cap_mimo}
R(\ve{Q}) = \mathrm{E}_{\ve{H}}[\log_2 \text{det} (\ve{I} + \rho\ve{\Sigma}^{-1}\ve{H}\ve{H}^{H})]
\end{align}
where the expectation $\mathrm{E}_{\ve{H}}[\cdot]$ is taken over all channel
realizations of $\ve{H}$.

\subsection{Problem Formulation}

We want to select the quantization, $\ve{Q}$, that ensures e2e schedulability,
and also maximizes the wireless capacity. This is expressed through an optimization
problem (OP):
\begin{align}
\text{OP:} \quad &\quad\max_{\ve{Q} \in \mathcal{L}^{n}} R(\ve{Q})\\
\text{s.t.}\quad &\tau_i(\ve{Q}) = (T_i, D_i), \forall i, \text{is schedulable}\label{eq:schedulability}
\end{align}
where $D_i$ is the transport delay bound of radio $i$. Solving OP is not practical
as there exists no e2e schedulability test for Eq.~\eqref{eq:schedulability}.
Instead, we solve the problem OP' for the restricted set of flows used in \algnameD 
for which schedulability test exists at the edge switches, that is:  
\begin{align}
\text{OP':} \quad &\quad\max_{\ve{Q} \in \mathcal{L}^{n}} R(\ve{Q})\\
\text{s.t.}\quad &(T_i, d_i^{'}), \forall i \in S_k, \forall k,
\text{is schedulable} \label{eq:schedulability_restrict}
\end{align}
where $d_i^{'}$ is defined as in Theorem~\ref{theorem:distro}.
Next, we show the following properties for the optimization problem.

Assuming the quantization noise is only dependent on the number of quantization 
bits, its power generally decreases with increasing resolution (e.g. $\gamma(x) \propto 2^{-2x}$
for uniform-level quantization).  

\begin{myprop}\label{prop:mono_rate}
If $\gamma(\cdot)$ is a monotonically decreasing function, then $R(\ve{Q})$ is 
monotonically increasing in $\ve{Q}$.
\end{myprop}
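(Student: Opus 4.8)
The plan is to prove the statement pointwise in the channel: for each fixed realization $\ve{H}$, I show that the integrand of $R(\ve{Q})$ is monotone in every quantization width $Q_i$, and then pass the monotonicity through the expectation $\mathrm{E}_{\ve{H}}[\cdot]$, which preserves inequalities. Writing $\ve{A} = \ve{H}\ve{H}^H \succeq 0$ and $\ve{\Sigma} = \text{diag}(s_1,\ldots,s_n)$ with $s_i = \sigma^2 + \gamma(Q_i)$, the crucial first move is the factorization
\[
\det(\ve{I} + \rho\ve{\Sigma}^{-1}\ve{A}) = \det(\ve{\Sigma}^{-1}(\ve{\Sigma} + \rho\ve{A})) = \frac{\det(\ve{\Sigma} + \rho\ve{A})}{\det(\ve{\Sigma})},
\]
so that the integrand becomes $f(\ve{\Sigma}) = \log_2\det(\ve{\Sigma} + \rho\ve{A}) - \log_2\det(\ve{\Sigma})$, a difference of two log-determinants that each depend on $\ve{\Sigma}$ only additively. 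This rewriting replaces the non-symmetric product $\ve{\Sigma}^{-1}\ve{A}$, whose Loewner ordering is awkward to control, by a form whose dependence on the diagonal entries is transparent.

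Next, since $\gamma(\cdot)$ is monotonically decreasing, raising any $Q_i$ lowers $s_i$, so it suffices to prove that $f$, viewed as a function of the positive reals $s_1,\ldots,s_n$, is nonincreasing in each $s_i$. Because both $\ve{\Sigma}$ and $\ve{\Sigma}+\rho\ve{A}$ depend on $s_i$ only through their $(i,i)$ entry, the standard identity $\partial_{s_i}\log\det\ve{M} = [\ve{M}^{-1}]_{ii}$ applies to each term and gives
\[
\frac{\partial f}{\partial s_i} = \frac{1}{\ln 2}\left([(\ve{\Sigma} + \rho\ve{A})^{-1}]_{ii} - [\ve{\Sigma}^{-1}]_{ii}\right).
\]
To sign this I would invoke operator antitonicity of matrix inversion: since $\rho\ve{A}\succeq 0$ we have $\ve{\Sigma}+\rho\ve{A}\succeq\ve{\Sigma}\succ 0$, hence $(\ve{\Sigma}+\rho\ve{A})^{-1}\preceq\ve{\Sigma}^{-1}$, and comparing diagonal entries (i.e.\ the quadratic form at $\ve{e}_i$) yields $[(\ve{\Sigma}+\rho\ve{A})^{-1}]_{ii}\leq[\ve{\Sigma}^{-1}]_{ii}$. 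Thus $\partial f/\partial s_i \leq 0$ for every $i$, so $f$ is nonincreasing in each $s_i$ and therefore nondecreasing in each $Q_i$. Since this holds for every fixed $\ve{H}$, taking $\mathrm{E}_{\ve{H}}[\cdot]$ preserves the inequality and establishes that $R(\ve{Q})$ is monotonically increasing in $\ve{Q}$.

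The main obstacle is conceptual rather than computational: it lies in not taking the tempting but flawed route of arguing that $\ve{\Sigma}^{-1}$ increases in the Loewner order as $\ve{\Sigma}$ decreases and then invoking monotonicity of $\ve{M}\mapsto\log_2\det(\ve{I}+\ve{M})$. That fails because $\ve{\Sigma}^{-1}\ve{A}$ is not Hermitian and the map $\ve{X}\mapsto\ve{X}^{1/2}\ve{A}\ve{X}^{1/2}$ is not operator monotone in general, so one cannot directly order the arguments of the log-det. The determinant-ratio factorization is precisely what sidesteps this difficulty; once it is in place, the remaining steps are routine applications of the log-det derivative formula and of the antitonicity of inversion.
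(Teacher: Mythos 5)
Your proof is correct, but it takes a genuinely different route from the paper's. The paper works directly with $\ve{Z} = \ve{I} + \rho\ve{\Sigma}^{-1}\ve{H}\ve{H}^{H}$ and argues (following a Bellman-style eigenvalue argument) that each eigenvalue $\lambda_j$ of $\ve{Z}$ is individually monotone in each $\mu_i = 1/(\sigma^2+\gamma(Q_i))$, because the characteristic polynomial $\det(\lambda\ve{I}-\ve{Z})$ is affine in $\mu_i$ and so cannot vanish at two distinct values of $\mu_i$; summing $\log_2\lambda_j$ then gives the claim. You instead factor $\det(\ve{I}+\rho\ve{\Sigma}^{-1}\ve{A}) = \det(\ve{\Sigma}+\rho\ve{A})/\det(\ve{\Sigma})$, differentiate the two log-determinants in $s_i$ via $\partial_{s_i}\log\det\ve{M} = [\ve{M}^{-1}]_{ii}$, and sign the difference using antitonicity of inversion, $(\ve{\Sigma}+\rho\ve{A})^{-1}\preceq\ve{\Sigma}^{-1}$. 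Your version proves a weaker intermediate fact (monotonicity of the product of eigenvalues rather than of each eigenvalue separately), which is all the proposition needs, and it is arguably tighter as an argument: the paper's contradiction only rules out a $\lambda_k$ that both increases and decreases, i.e.\ it establishes monotonicity of each eigenvalue but does not by itself fix the \emph{direction} (increasing rather than decreasing), nor does it address eigenvalue crossings and relabeling; your derivative computation sidesteps both issues and yields an explicit nonpositive slope. Your closing remark about why the naive Loewner-order route fails is also well taken --- $\ve{D}\ve{A}\ve{D}\succeq\ve{A}$ does not follow from $\ve{D}\succeq\ve{I}$ --- and is precisely the pitfall that both your factorization and the paper's eigenvalue argument are designed to avoid. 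One cosmetic point: since $Q_i$ ranges over a discrete set $\mathcal{L}$, your continuous-derivative argument should be read as establishing monotonicity in $s_i$ over the reals and then restricting to the discrete grid, which is immediate but worth a sentence.
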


\begin{proof}
Let $\ve{Z} = \ve{I} + \rho\ve{\Sigma}^{-1}\ve{H}\ve{H}^{H}$, $\ve{\Sigma}^{-1} = \text{diag}(\mu_1, \ldots, \mu_n)$,
where $\mu_i = 1/(\sigma^2+\gamma(Q_i))$ for $i = 1, \ldots, n$, and
let $\lambda_1, \ldots, \lambda_n$ denote the eigen values of $\ve{Z}$
such that $\lambda_1 \geq \lambda_2 \geq \ldots \geq \lambda_n$.
We follow an approach similar to \cite[Theorem 10.3]{matrix} for the proof.
For a fixed $i$, we claim that $\lambda_j$, for every $j$, is monotonically increasing
in $\mu_i$. On the contrary, suppose this is not true and there exists an
interval of $\mu_i$, and $k$, $k \in \{1,\ldots, n\}$, such that $\lambda_k$ increases
and decreases in that interval. Consequently there exists $\lambda^{'}$ such that
$(\lambda^{'}-\lambda_k)$ vanishes for at least two values of $\mu_i$.
Since $\text{det}(\lambda\ve{I} - \ve{Z})$ $=\Pi_{j=1}^{n}(\lambda - \lambda_j )$,
then, for $\lambda = \lambda^{'}$, $\text{det}(\lambda\ve{I} - \ve{Z})$
vanishes for more than one value of $\mu_i$, which is impossible,
since $\text{det}(\lambda\ve{I} - \ve{Z})$ is a linear polynomial in $\mu_i$.
Therefore if $\gamma(Q_i)$ strictly decreases with $Q_i$,
$\mu_i$ strictly increases with $Q_i$, thus, $\forall j$, $\lambda_j$
monotonically increases with $Q_i$, for any $i = 1, \ldots, n$.
Since $\log_2\text{det}(\ve{Z}) = \sum_{j=1}^{n} \log_2\lambda_j$,
and expectation is a monotonic operator, therefore, $R(\ve{Q}) = \mathrm{E}_{\ve{H}}[\log_2\text{det}(\ve{Z})]$ 
is monotonically increasing with $Q_i$, $\forall i$.
\end{proof}

\begin{myprop}\label{prop:mono_sched}
For quantization $\ve{Q}$, if the set of traffic flows, $(T_i, d_i^{'}), \forall i \in S_k$
is schedulable for all edge switches $k$, where $d_i^{'}$ is as defined in Theorem~\ref{theorem:distro},
then the traffic flows $(T_i^{'}, d_i^{'}), \forall i \in S_k$ are also schedulable for 
all $k$, for quantization, $\ve{Q}^{'}$, such that $Q_i^{'} \leq Q_i, \forall i$.
\end{myprop}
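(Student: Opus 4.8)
The plan is to exploit the fact that reducing each quantization width only lengthens the arrival periods while leaving both the per-packet transmission time and the transformed deadlines untouched, and that schedulability of non-preemptive real-time flows is monotone in exactly this direction. First I would observe that the transformed deadline $d_i' = D_i - \frac{1-q^{-h}}{1-q^{-1}}C_1 - (h+1)(t_s+t_p)$ depends only on the fixed network quantities $D_i$, $q$, $h$, $t_s$, $t_p$ and on $C_1 = B/(\text{edge link rate})$, which is invariant under a change of quantization since the packet size $B$ is fixed by design requirement~4. Hence the flow sets at edge switch $k$ under $\ve{Q}$ and under $\ve{Q}'$ carry the \emph{same} deadlines $d_i'$ and the \emph{same} transmission time $C_1$; only the periods differ. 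This justifies the proposition's reuse of the symbol $d_i'$ for both quantizations.

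Next I would quantify the change in period. From $T_i = B/(2 Q_i f_i)$, the hypothesis $Q_i' \leq Q_i$ gives $T_i' = B/(2 Q_i' f_i) \geq T_i$ for every $i$: lowering the resolution lowers the transport rate $R_i$ and thereby stretches the inter-arrival period. Thus passing from $\ve{Q}$ to $\ve{Q}'$ replaces each edge-switch flow $(T_i, d_i')$ by $(T_i', d_i')$ with $T_i' \geq T_i$, the same deadline, and the same per-packet work $C_1$ — a strictly lighter offered load on an unchanged link.

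It then remains to invoke the monotonicity of the Appendix schedulability tests with respect to period. For non-preemptive EDF (Theorem~\ref{theorem:edf}), the feasibility condition is expressed through a processor-demand inequality in which each flow contributes a term of the form $\left(\left\lfloor (L - d_i')/T_i \right\rfloor + 1\right) C_1$ over an interval of length $L$, while the non-preemptive blocking term depends only on $C_1$. Since $\lfloor (L - d_i')/T_i' \rfloor \leq \lfloor (L - d_i')/T_i \rfloor$ whenever $T_i' \geq T_i$, the demand at every $L$ can only decrease while the available side $L$ and the blocking are unchanged, so every feasibility inequality that held for $\ve{Q}$ still holds for $\ve{Q}'$. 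The same reasoning covers the non-preemptive fixed-priority test (Theorem~\ref{theorem:fixed}): the interference from higher-priority flows enters through ceiling terms $\lceil t/T_i \rceil$ that are non-increasing in $T_i$, and the blocking term is again period-independent. Applying this at each edge switch $k$ yields schedulability of $(T_i', d_i')$ for all $i \in S_k$ and all $k$, as claimed.

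The main obstacle is the fixed-priority case: if priorities were reassigned when the periods change (e.g., a rate-monotonic rule), a longer period could demote a flow and, in principle, perturb another flow's response time, breaking the clean comparison. I would therefore either hold the priority assignment fixed across $\ve{Q}$ and $\ve{Q}'$ — so that the per-flow interference terms are directly comparable and the monotonicity argument goes through verbatim — or state the result for EDF, where the demand-bound comparison requires no such caveat. Verifying that a fixed assignment suffices for the intended greedy search is the one point I would check carefully before finalizing the argument.
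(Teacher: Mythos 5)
Your argument is essentially the paper's own proof: both reduce the claim to the observation that $Q_i'\leq Q_i$ lengthens the period $T_i$ while leaving $C_1$ and $d_i'$ unchanged, and then invoke the monotonicity of the ceiling/floor demand terms in the non-preemptive EDF test (Eq.~\eqref{eq:edf2}) and the non-preemptive fixed-priority test (Eq.~\eqref{eq:fixed2}) with respect to the period. Your caveat about priorities being held fixed across $\ve{Q}$ and $\ve{Q}'$ is a legitimate refinement that the paper's ``similar exercise'' remark silently assumes, but it does not change the route of the proof.
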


\begin{proof}
Let $\mathcal{K}^{'}$ be the set of edge switches where radios see the reduction
in their quantization width. Consider any $ k \in \mathcal{K}^{'}$: for all $i \in S_k$,
changing $Q_i$ to $Q_{i}^{'}$ increases the inter-arrival period from $T_i$ to $T_i^{'}$.
For simplicity, assume the switch implements EDF scheduling and therefore
Eq.~\eqref{eq:edf2} is true for flows $(T_i, d_i^{'}), i \in S_k$. Since $T_i < T_i^{'}$ implies
$\lceil{(t-d_i^{'})/T_i^{'}}\rceil^{+} \leq \lceil{(t-d_i^{'})/T_i}\rceil^{+}$,
$\forall t$, substituting in Eq.~\eqref{eq:edf2}, it follows that $(T_i^{'}, d_i^{'})$,
$i = 1,\ldots,n$, is also schedulable using Theorem~\ref{theorem:edf}. A similar 
exercise with Eq.~\eqref{eq:fixed2} shows that the flows are also schedulable for
fixed-priority non-preemptive scheduling given by Theorem~\ref{theorem:fixed}. 
\end{proof}

\subsection{Search Algorithm}

The search space of OP$'$ is $\mathcal{L}^n$, hence, finding the optimal $\ve{Q}$
through a brute-force search has an exponential complexity.
Note that we cannot relax the integral constraint because of the discreteness of
the schedulability. However, from the monotonic dependence on $\ve{Q}$ (Propositions
\ref{prop:mono_rate}--\ref{prop:mono_sched}), we can construct a greedy search algorithm.


The idea is to use breadth-first search (BFS) on the enumeration of the search
space. Starting from the highest quantization, $[L_d, \ldots, L_d]$, we enumerate
the next highest quantizations, $[L_{d-1}, L_{d}, \ldots, L_d]$,
$\ldots$, $[L_{d}, L_d, \ldots, L_{d-1}]$, and then enumerate the next highest quantization
for each of them, and so on. More generally, let us define the function $enum(\cdot)$,
for quantization,
$\ve{Q} = [L_{k_1}, L_{k_2}, \ldots, L_{k_n}]$, as follows:
\begin{align*}\label{eq:enum}
enum([L_{k_1}, L_{k_2}, \ldots, L_{k_n}]) = &\{[L_{k_1-1}, L_{k_2}, \ldots, L_{k_n}], \\
                                &[L_{k_1}, L_{k_2-1}, \ldots, L_{k_n}], \\
                                & \ldots, [L_{k_1}, L_{k_2}, \ldots, L_{k_n-1}]\}.
\end{align*}
As we enumerate each possible quantization, we check the flow shcedulability
(according to Eq.~\eqref{eq:schedulability_restrict}). If the flows are schedulable,
we calculate the corresponding capacity, but do not enumerate the quantization 
further. Finally, from the calculated capacities, we find quantization $\ve{Q}^\star$ that
achieves the maximum capacity.

Algorithm \ref{alg:search} gives the pseudo-code for implementing our proposed
solution. It uses a queue data structure, $U$, for breadth-first traversal. We
show that this algorithm finds a solution to OP$'$. Though in the worst
case it has the same $O(d^n)$ complexity, the running time in practice is
much less than a brute-force search.

\begin{algorithm}[t]
\caption{BFS search}\label{alg:search}
\begin{algorithmic}[1]
          \State{{\it Initialize}: $\ve{Q}^\star \leftarrow \phi$ , $C^\star \leftarrow 0$, $U \leftarrow  \text{Queue}()$}
          \State{{\it Returns}: $C^\star$, $\ve{Q}^\star$, optimal capacity and quantization}
          \State{U.push($[L_d, \ldots, L_d]$)}
          \While {U is not empty} \Comment{breadth-first traversal}
          \State{$\ve{Q}^{'} \leftarrow$   U.pop()}
          \If{$\ve{Q}^{'}$ is schedulable} \Comment{check e2e schedulability}
          \If{$R(\ve{Q}^{'}) > C^\star$}
          \State{$C^\star \leftarrow R(\ve{Q}^{'})$, $\ve{Q}^\star \leftarrow \ve{Q}^{'}$ }
          \EndIf
          \Else
          \For {$\ve{T}$ in $enum(\ve{Q}')$}
          \State{ U.push($\ve{T}$)} \Comment{next highest quantization}
          \EndFor
          \EndIf
          \EndWhile
  \end{algorithmic}
\end{algorithm}

\begin{theorem}\label{theorem:search}
Algorithm \ref{alg:search} gives the optimal solution to OP$'$.
\end{theorem}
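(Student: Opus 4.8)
The plan is to exploit the two monotonicity properties so as to show that the search (i) never prunes away the optimum and (ii) is guaranteed to visit it. First I would fix the partial order on the search space: write $\ve{Q}' \preceq \ve{Q}$ when $Q_i' \leq Q_i$ for every $i$. Proposition~\ref{prop:mono_rate} says $R$ is increasing in this order, and Proposition~\ref{prop:mono_sched} says the feasible (schedulable) set $F \subseteq \mathcal{L}^n$ is downward closed, or equivalently that its complement (the infeasible set) is upward closed. From monotonicity of $R$, among the optimal solutions of OP$'$ there is one, call it $\ve{Q}^\star$, that is \emph{maximal} in $F$: no feasible $\ve{Q} \succ \ve{Q}^\star$ exists. (Were a chosen optimum not maximal, one could climb to a dominating feasible point of no smaller capacity; finiteness of $\mathcal{L}^n$ guarantees such climbing reaches a maximal point.) These two structural facts are the engine of the argument.

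Next I would justify the pruning rule. The algorithm records $R(\ve{Q}')$ only for feasible $\ve{Q}'$ and refuses to enumerate below any feasible point. This is sound because every $\ve{Q}'' \prec \ve{Q}'$ reachable by $enum$ from a feasible $\ve{Q}'$ satisfies $R(\ve{Q}'') \leq R(\ve{Q}')$ by Proposition~\ref{prop:mono_rate}, so no pruned point can exceed the incumbent value already recorded at $\ve{Q}'$. Consequently the optimal \emph{value} is never discarded by pruning, and the only remaining obligation is to show that the search actually pops the maximal optimal point $\ve{Q}^\star$.

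Then I would establish reachability of $\ve{Q}^\star$ by a chain argument along the BFS lattice. Each application of $enum$ decrements exactly one coordinate, so from the top $P_0 = [L_d,\ldots,L_d]$ one can descend to $\ve{Q}^\star$ by a sequence $P_0, P_1, \ldots, P_m = \ve{Q}^\star$ with $P_{t+1} \in enum(P_t)$. For each $t < m$ we have $P_t \succ \ve{Q}^\star$, so by maximality of $\ve{Q}^\star$ the point $P_t$ is infeasible. (If $\ve{Q}^\star$ is the top itself, it is popped first, found feasible, and recorded, and we are done.) Because each $P_t$ with $t<m$ is infeasible, the algorithm expands it and, by definition of $enum$, pushes $P_{t+1}$; inductively $P_0$ is popped and expanded, pushing $P_1$, which is popped and expanded, and so on until $\ve{Q}^\star = P_m$ is pushed and eventually popped. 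When popped, $\ve{Q}^\star$ is feasible, so its capacity is evaluated; since all recorded points lie in $F$ and therefore have capacity at most $R(\ve{Q}^\star)$, the incumbent satisfies $C^\star = R(\ve{Q}^\star)$ at termination. Finiteness of $\mathcal{L}^n$ and the fact that $enum$ strictly increases the bounded total-decrement level $\sum_i (d - k_i)$ guarantee termination, so the algorithm returns the optimal value together with a genuine optimizer.

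I expect the main obstacle to be precisely this reachability step: one must verify that the pruning rule — expanding a node only when it is infeasible — still leaves an \emph{unbroken} chain of infeasible ancestors descending to $\ve{Q}^\star$, so that $\ve{Q}^\star$ is never cut off by an ancestor that was prematurely found feasible and pruned. The entire argument hinges on coupling the maximality of $\ve{Q}^\star$ with the upward closure of infeasibility (Proposition~\ref{prop:mono_sched}), which is exactly what forces every strict ancestor of $\ve{Q}^\star$ along the chosen descent to be infeasible and hence expanded.
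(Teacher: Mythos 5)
Your proof is correct and rests, at its core, on the same two ingredients as the paper's: Proposition~\ref{prop:mono_rate} for the soundness of pruning below feasible nodes, and the order structure of the feasible set for locating the optimum. The difference is that you carry out explicitly a step the paper only asserts, namely reachability. The paper's proof states that $C^\star$ ``will always be larger than the capacity of all unenumerated quantizations''; justifying this requires showing that every feasible point either is popped by the algorithm or is dominated by a feasible point that is popped, which is exactly your chain argument (pick a maximal optimizer $\ve{Q}^\star$, observe that every strict ancestor on a descent chain from $[L_d,\ldots,L_d]$ is infeasible by maximality, hence expanded, hence $\ve{Q}^\star$ is eventually pushed, popped, and evaluated). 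This makes your write-up the more complete of the two. One small quibble: the reachability step does not actually hinge on Proposition~\ref{prop:mono_sched} as your closing paragraph claims --- the infeasibility of the strict ancestors of $\ve{Q}^\star$ follows from the maximality of $\ve{Q}^\star$ in the feasible set alone, and the existence of a maximal optimizer needs only Proposition~\ref{prop:mono_rate} and finiteness of $\mathcal{L}^n$. Proposition~\ref{prop:mono_sched} is what the paper invokes to note that the pruned descendants of a feasible node are themselves feasible, a fact that is not needed for value-optimality since those descendants are capacity-dominated by their feasible ancestor regardless of their own schedulability.
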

\begin{proof}\let\qed\relax
We show that $\ve{Q}^\star$ is both schedulable and achieves maximum capacity.
Clearly, the schedulability of $\ve{Q}^\star$ holds from Line 6 in the algorithm. 
We know from Proposition \ref{prop:mono_rate}, the capacity from enumeration (in Line 11) 
always decreases, and hence $C^\star$ will always be larger than the capacity of 
all unenumerated quantizations. Note that from Proposition \ref{prop:mono_sched}), 
when a given quantization is schedulable, all its enumerations with smaller 
quantizations are also schedulable. Also, from Line 7, $C^\star$ is the maximum 
of all enumerated quantizations that are schedulable. Therefore, $C^\star$ is 
the optimal capacity that is schedulable.\qedwhite
\end{proof}

In summary, \algnameD's delay characteristics not only provides 
real-time guarantees but also supports the design of an optimally performing 
network.

\section{Related Work}\label{related}

Fat-Tree topology was first introduced as a routing network for parallel 
computation for networks-on-chip\cite{fat_tree}; its formal definition as "k--ary n--trees" 
is given in \cite{fat_tree_formal}. In the context of datacenter networks,
where commodity (off-the-shelf) switches are used, and have a fixed number 
of ports, fat-tree topologies (or folded clos networks \cite{clos}) have been used 
to build scalable and modular architectures \cite{alfaras, fat_tree_follow}. 
As these networks are bandwidth bound, they cannot meet real-time requirements. We note that 
the Fat-Tree variant used in this paper is the simplest kind
with no interconnections between the edge and core switches. 

A large-scale MU-MIMO was realized using a distributed architecture
of servers in \cite{bigstation:13}. The authors claim that bandwidth needed for
baseband transport is not an issue as modern switches support up to 40Gbps links.
However, they do not consider the real-time guarantees of transporting
baseband samples. ARGOS \cite{argos} is another practical multi-antenna setup
that suggests daisy-chained radios with Tree-based aggregation. However, no
real-time analysis was provided there. C-RAN \cite{cran} also proposes a 
data-center model for baseband processing, which must be carefully designed as 
these networks are known to have varying traffic and congestion patterns \cite{dctcp}.

To eliminate jitter, scheduled Ethernet using global scheduling has been proposed
for fronthaul transport in a C-RAN\cite{ashwood}.  On the other hand, the authors of \cite{spiro, uta}
studied the compression of baseband signals. They propose quantization schemes
for lossy compression of the baseband samples.
In \cite{spiro}, the authors propose prioritization of baseband frames, though,
the priorities are not based on packet delays and therefore cannot provide
e2e guarantees.

Real-time transport of Ethernet packets is a well-known problem in real-time systems
literature \cite{kandlur:91, hui:95, survey:94}. While \cite{kandlur:91} provides
a general approach for schedulability, the authors propose the use of flow regulators at
each switch, which is difficult to realize as compared to \algnameD's edge-switch 
scheduling.

\section{Conclusion}\label{conclusion}

This paper provides a design for baseband transport network based on
Fat-Tree topology that is scalable to a large number of radios while guaranteeing
real-time delivery. We calculate the delay bound of baseband traffic
and provide sufficient criteria for e2e schedulability, which is validated via 
simulations. Certain design restrictions such as symmetric links and equal packet 
sizes are made but may be relaxed as long as a bound for queuing time at each 
switch is obtainable. 


We also characterize the wireless capacity with the schedulability constraint,
and provide an efficient search algorithm to maximize it. In practice, our design 
and analysis provides a principled approach to the deployment of baseband transport
networks that are critical for upcoming wireless system architectures.

\section{Appendix}\label{appendix}
%
\begin{customthm}{A.1}\label{theorem:edf}
Under deadline scheduling, a set of traffic flows $\tau_i = (T_i, d_i), i = 1,2,\ldots, n$,
is schedulable on a link with transmission time $C$, and preemption if and only if \cite[Theorem 1]{zheng:94}:
\begin{equation}\label{eq:edf}
\forall t>0, \quad \frac{C}{t}\left(\sum_{i=1}^{n}\left\lceil{(t-d_i)/T_i}\right\rceil^{+}\right) \leq 1,
\end{equation}
and without preemption if and only if \cite[Theorem 6]{zheng:94}:
\begin{equation}\label{eq:edf2}
\forall t\geq d_{min}, \quad \frac{C}{t}\left(1 + \sum_{i=1}^{n}\left\lceil{(t-d_i)/T_i}\right\rceil^{+}\right) \leq 1
\end{equation}
where $d_{min} = \min\{d_i: 1\leq i \leq n\}$ and the function $\lceil x \rceil^{+} = max(0,\lceil x \rceil)$.
\end{customthm}

\begin{customthm}{A.2}\label{theorem:fixed}
Under fixed-priority scheduling, a set of traffic flows $\tau_i = (T_i, d_i), i = 1,2,\ldots, n$,
and priorities $\pi_i$, $i = 1,2,\ldots, n$,  such that $\pi_1 \geq \pi_2 \geq \ldots \geq \pi_n$, is schedulable
on a link with transmission time $C$, if the function $W_m (k,x)$ satisfies:
\begin{equation}
\max_{1\leq m\leq n} \max_{k \leq N_m} W_m (k, (k-1)T_m + d_m) \leq 1
\end{equation}
where $N_m = \min\{k : W_m(k, kT_m) \leq  1\}$, and function $W_m (k,x)$, for preemption, is defined as \cite{cmu:90}:
\begin{equation}
W_m(k,x) = \min_{0<t\leq x} \frac{C}{t}\left ( k+ \sum_{i=1}^{m-1} \left\lceil t/T_i\right\rceil \right),
\end{equation}
and without preemption, $W_m (k,x)$ is defined as \cite{utah:99}:
\begin{equation}\label{eq:fixed2}
W_m(k,x) = \min_{0<t\leq x} \frac{C}{t}\left ( k+1 + \sum_{i=1}^{m-1}\left(1 + \left\lfloor{(t-C)/T_i}\right\rfloor \right) \right)
\end{equation}
%
\end{customthm}

\balance
\small
\bibliographystyle{IEEEtran}
\bibliography{main}

\begin{thebibliography}{10}
\providecommand{\url}[1]{#1}
\csname url@samestyle\endcsname
\providecommand{\newblock}{\relax}
\providecommand{\bibinfo}[2]{#2}
\providecommand{\BIBentrySTDinterwordspacing}{\spaceskip=0pt\relax}
\providecommand{\BIBentryALTinterwordstretchfactor}{4}
\providecommand{\BIBentryALTinterwordspacing}{\spaceskip=\fontdimen2\font plus
\BIBentryALTinterwordstretchfactor\fontdimen3\font minus
  \fontdimen4\font\relax}
\providecommand{\BIBforeignlanguage}[2]{{%
\expandafter\ifx\csname l@#1\endcsname\relax
\typeout{** WARNING: IEEEtran.bst: No hyphenation pattern has been}%
\typeout{** loaded for the language `#1'. Using the pattern for}%
\typeout{** the default language instead.}%
\else
\language=\csname l@#1\endcsname
\fi
#2}}
\providecommand{\BIBdecl}{\relax}
\BIBdecl

\bibitem{cran}
``{C-RAN: The road towards green RAN},''
  \url{http://labs.chinamobile.com/cran/wp-content/uploads/2014/06/20140613-C-RAN-WP-3.0.pdf},
  [Online; accessed 29-Nov-2015].

\bibitem{cran_survey}
A.~Checko, H.~Christiansen, Y.~Yan, L.~Scolari, G.~Kardaras, M.~Berger, and
  L.~Dittmann, ``Cloud ran for mobile networks -- a technology overview,''
  \emph{Communications Surveys Tutorials, IEEE}, vol.~17, no.~1, pp. 405--426,
  2015.

\bibitem{massive}
F.~Rusek, D.~Persson, B.~K. Lau, E.~Larsson, T.~Marzetta, O.~Edfors, and
  F.~Tufvesson, ``{Scaling Up MIMO: Opportunities and Challenges with Very
  Large Arrays},'' \emph{IEEE Signal Processing Magazine}, vol.~30, no.~1, pp.
  40--60, Jan 2013.

\bibitem{argos}
C.~Shepard, H.~Yu, N.~Anand, E.~Li, T.~Marzetta, R.~Yang, and L.~Zhong,
  ``{Argos: practical many-antenna base stations},'' in \emph{Proc. of
  MOBICOM}, 2012.

\bibitem{oai:15}
N.~Nikaein, ``Processing radio access network functions in the cloud: Critical
  issues and modeling,'' in \emph{Proceedings of the 6th International Workshop
  on Mobile Cloud Computing and Services}, 2015.

\bibitem{sora}
K.~Tan, J.~Zhang, J.~Fang, H.~Liu, Y.~Ye, S.~Wang, Y.~Zhang, H.~Wu, W.~Wang,
  and G.~M. Voelker, ``Sora: High performance software radio using general
  purpose multi-core processors,'' in \emph{Proceedings of the 6th USENIX
  Symposium on Networked Systems Design and Implementation}, 2009, pp. 75--90.

\bibitem{hui:95}
H.~Zhang, ``{Service disciplines for guaranteed performance service in
  packet-switching networks},'' \emph{Proceedings of the IEEE}, vol.~83,
  no.~10, pp. 1374--1396, 1995.

\bibitem{survey:94}
C.~M. Aras, J.~F. Kurose, D.~S. Reeves, and H.~Schulzrinne, ``{Real-time
  communication in packet-switched networks},'' \emph{Proceedings of the IEEE},
  vol.~82, no.~1, pp. 122--139, 1994.

\bibitem{alfaras}
M.~Al-Fares, A.~Loukissas, and A.~Vahdat, ``{A Scalable, Commodity Data Center
  Network Architecture},'' in \emph{Proceedings of the ACM SIGCOMM}, 2008.

\bibitem{fat_tree_follow}
A.~Greenberg, J.~R. Hamilton, N.~Jain, S.~Kandula, C.~Kim, P.~Lahiri, D.~A.
  Maltz, P.~Patel, and S.~Sengupta, ``{VL2: A Scalable and Flexible Data Center
  Network},'' in \emph{ACM SIGCOMM}, 2009.

\bibitem{spiro}
E.~Chai, K.~Shin, S.~Lee, J.~Lee, and R.~Etkin, ``{SPIRO: Turning Elephants
  into Mice with Efficient RF Transport},'' in \emph{Proc. of IEEE INFOCOMM},
  2015.

\bibitem{uta}
K.~F. Nieman and B.~L. Evans, ``{Time-Domain Compression of Complex-Baseband
  LTE Signals for Cloud Radio Access Network},'' in \emph{IEEE SIP}, 2013.

\bibitem{zheng:94}
Q.~Zheng and K.~G. Shin, ``{On the ability of establishing real-time channels
  in point-to-point packet-switched networks},'' \emph{IEEE Transactions on
  Communications}, vol.~42, no. 234, pp. 1096--1105, Feb 1994.

\bibitem{cmu:90}
J.~P. Lehoczky, ``{Fixed priority scheduling of periodic task sets with
  arbitrary deadlines},'' in \emph{Proceedings of Real-Time Systems Symposium},
  1990.

\bibitem{usrp}
``{Universal Software Radio Peripheral},'' \url{http://ettus.com/}.

\bibitem{kandlur:91}
D.~D. Kandlur, K.~G. Shin, and D.~Ferrari, ``{Real-time Communication in
  Multi-hop Networks},'' \emph{IEEE Transactions on Parallel and Distributed
  Systems}, vol.~5, no.~10, pp. 1044--1056, 1994.

\bibitem{liu_layland}
C.~L. Liu and J.~W. Layland, ``{Scheduling Algorithms for Multiprogramming in a
  Hard-Real-Time Environment},'' \emph{J. ACM}, vol.~20, no.~1, pp. 46--61,
  1973.

\bibitem{tindell:94}
K.~W. Tindell, A.~Burns, and A.~J. Wellings, ``{An Extendible Approach for
  Analyzing Fixed Priority Hard Real-time Tasks},'' \emph{Real-Time Systems},
  vol.~6, no.~2, pp. 133--151, 1994.

\bibitem{80211ac}
``{ Wireless LAN Medium Access Control (MAC) and Physical Layer (PHY)
  Specifications },'' IEEE Std. 80211ac Draft 3.0, 2012.

\bibitem{bigstation:13}
Q.~Yang, X.~Li, H.~Yao, J.~Fang, K.~Tan, W.~Hu, J.~Zhang, and Y.~Zhang,
  ``Bigstation: Enabling scalable real-time signal processingin large mu-mimo
  systems,'' in \emph{Proc. of SIGCOMM}, 2013.

\bibitem{hol}
M.~Karol, M.~Hluchyj, and S.~Morgan, ``{Input Versus Output Queueing on a
  Space-Division Packet Switch},'' \emph{IEEE Transactions on Communications},
  vol.~35, no.~12, pp. 1347--1356, 1987.

\bibitem{ns3}
``{NS--3},'' \url{https://www.nsnam.org/}, [Online; accessed 14-Apr-2016].

\bibitem{mimo_capacity}
L.~Schumacher, K.~I. Pedersen, P.~E. Mogensen, N.~J. Vej, and D.-A. Øst,
  ``{From Antenna Spacings To Theoretical Capacities - Guidelines For
  Simulating Mimo Systems},'' in \emph{IEEE PIMRC}, 2002.

\bibitem{matrix}
R.~Bhatia, \emph{Perturbation Bounds for Matrix Eigenvalues}.\hskip 1em plus
  0.5em minus 0.4em\relax Society for Industrial and Applied Mathematics, 2007.

\bibitem{fat_tree}
C.~E. Leiserson, ``{Fat-trees: Universal Networks for Hardware-efficient
  Supercomputing},'' \emph{IEEE Trans. Comput.}, vol.~34, no.~10, pp. 892--901,
  1985.

\bibitem{fat_tree_formal}
F.~Petrini and M.~Vanneschi, ``{K-ary N-trees: High Performance Networks for
  Massively Parallel Architectures},'' Tech. Rep., 1995.

\bibitem{clos}
W.~Dally and B.~Towles, \emph{Principles and Practices of Interconnection
  Networks}.\hskip 1em plus 0.5em minus 0.4em\relax Morgan Kaufmann Publishers,
  2004.

\bibitem{dctcp}
M.~Alizadeh, A.~Greenberg, D.~A. Maltz, J.~Padhye, P.~Patel, B.~Prabhakar,
  S.~Sengupta, and M.~Sridharan, ``{Data Center TCP (DCTCP)},'' in \emph{Proc.
  of the ACM SIGCOMM}, 2010.

\bibitem{ashwood}
``{A Performance Study of CPRI over Ethernet},''
  \url{http://www.ieee1904.org/3/meeting archive/2015/02/tf3 1502 ashwood
  1a.pdf}, [Online; accessed 14-Apr-2016].

\bibitem{utah:99}
Y.~Wang and M.~Saksena, ``{Scheduling fixed-priority tasks with preemption
  threshold},'' in \emph{International Conference on Real-Time Computing
  Systems and Applications}, 1999.

\end{thebibliography}
\end{document}